\def\R{\mathbb{R}}
\def\bZ{\boldsymbol{Z}}
\numberwithin{equation}{section}
\theoremstyle{plain}
\def\btau{\boldsymbol{\tau}}
\def\bC{\boldsymbol{C}}
\def\bZ{\boldsymbol{Z}}
\def\bz{\boldsymbol{z}}
\def\ss{\boldsymbol{s}}
\newcommand{\U}{U}
\def\varphiall{\@ifnextchar[{\varphiall@i}{\varphiall@i[]}}
\def\varphiall@i[#1]{\@ifnextchar[{\varphiall@ii{#1}}{\varphiall@ii{#1}[#1]}}
\def\varphiall@ii#1[#2]{\varphi_{\mu,\kappa,\beta_{#1},\sigma^2_{#2}}}
\def\hatvarphiall{\@ifnextchar[{\hatvarphiall@i}{\hatvarphiall@i[]}}
\def\hatvarphiall@i[#1]{\@ifnextchar[{\hatvarphiall@ii{#1}}{\hatvarphiall@ii{#1}[#1]}]}
\def\hatvarphiall@ii#1[#2]{\widehat{\varphi}_{\mu,\kappa,\beta_{#1},\sigma^2_{#2}}}
\def\R{\mathbb{R}}
\def\N{\mathbb{N}}
\def\d{\textrm{d}}
\newtheorem{theo}{Theorem}
\newtheorem{lemma}{Lemma}
\renewcommand{\c}{\boldsymbol{c}}
\newcommand{\norm}[1]{\left\Vert#1\right\Vert}
\DeclareMathOperator*{\argmax}{arg\,max}
\DeclareMathOperator{\var}{Var}
\begin{document}


\baselineskip=28pt \vskip 5mm
\begin{center} {\LARGE{\bf Estimation and prediction
of Gaussian processes
 using generalized Cauchy  covariance model under fixed domain asymptotics}}
\end{center}

\begin{center}\large
Moreno Bevilacqua \footnote{ \baselineskip=10pt
Universidad de Valparaiso, Department of Statistics, 2360102 Valparaiso, Chile.\\
Millennium Nucleus Center
for the Discovery of Structures
in Complex Data , Chile.\\
E-mail:  moreno.bevilacqua@uv.cl \\
},
Tarik Faouzi \footnote{ \baselineskip=10pt
Department of Statistics, Applied Mathematics Research Group, University of BioBio, 4081112  Concepci\'on, Chile\\
E-mail: tfaouzi@ubiobio.cl
}


\end{center}

\baselineskip=17pt \vskip 2mm \centerline{\today} \vskip 2mm

\vspace{2cm}

\begin{abstract}

We study estimation and prediction of Gaussian processes with
covariance model  belonging to the generalized Cauchy (GC) family,
under fixed domain asymptotics.
Gaussian processes with this kind of covariance function
  provide separate characterization of fractal dimension and
  long range dependence,
  an appealing feature in many physical, biological or  geological
  systems.
The results of the paper are classified into three parts.

In the first part, we
characterize the equivalence of two Gaussian measures with GC
covariance functions. Then we provide sufficient conditions for the
equivalence of two Gaussian measures with Mat{\'e}rn (MT) and GC covariance
functions and  two Gaussian measures with Generalized Wendland (GW) and GC covariance
functions.

 In the second part,
we establish strong consistency and asymptotic distribution of the
maximum likelihood estimator of the microergodic parameter associated
to GC covariance model, under fixed domain asymptotics.
The last  part focuses on optimal prediction with GC model
and specifically, we give conditions
 for asymptotic efficiency
prediction and asymptotically correct estimation  of mean square error
using a misspecified GC, MT or GW model.

Our findings are illustrated through a simulation study: the first
compares the finite sample behavior of the maximum likelihood
estimation of the microergodic parameter of the GC model  with the given asymptotic
distribution.  We then compare the finite-sample behavior of the
prediction and its associated mean square error when the true model is GC
and the prediction is performed using the true model
and a  misspecified  GW model.
\end{abstract}

{\em Keywords:  fixed domain
asymptotics; long memory; microergodic parameter; maximum likelihood.}

\def\bC{\boldsymbol{C}}
\def\bvarphi{\boldsymbol{\varphi}}
\newpage

\section{Introduction}
Two fundamental steps in   geostatistical analysis are estimating the parameters
of a Gaussian stochastic process  and predicting the process at new locations.
In both steps, the covariance function covers a central aspect.
For instance, mean square error optimal  prediction at an
unobserved site depends on the knowledge of the covariance function.
Since a covariance function must be positive definite, practical
estimation generally requires the selection of some parametric families
of covariances and the corresponding estimation of these parameters.

The maximum likelihood (ML) estimation method is generally considered
the best method for estimating the parameters of covariance models.
Nevertheless, the study of the asymptotics properties of ML estimation,
is complicated by the fact that more than one asymptotic frameworks
can be considered when observing a single realization \citep{Zhang:Zimmerman:2005}.
The increasing domain
asymptotic framework corresponds to the case where the sampling domain increases
with the number of observed data and where the distance between any
two sampling locations is bounded away from 0.
The fixed domain asymptotic framework, sometimes called infill asymptotics \citep{Cressie:1993}, corresponds to the case where more and more data are observed in some
fixed bounded sampling domain.



General results for the asymptotics properties of the ML estimator, under increasing
domain asymptotic framework and some mild regularity conditions, are
given in \cite{Mardia:Marshall:1984} and \cite{Bachoc:2014}. Specifically, they show that ML
estimates are consistent and asymptotically Gaussian with asymptotic
covariance matrix equal to the inverse of the Fisher information
matrix.

Under fixed domain asymptotics, no general results are available for
the asymptotic properties of ML estimation. Yet, some results have
been obtained when assuming the covariance belongs to  Mat{\'e}rn (MT) \citep{Mate:60}  or Generalized Wendland (GW) \citep{Gneiting:2002}  models.
Both families allow for a continuous parameterization of smoothess of the underlying Gaussian process, the GW family being additionally compactly supported
 \citep{Bevilacqua_et_al:2016}.
Specifically, when the smoothness
parameter is known and fixed, not all parameters can be estimated
consistently, when $d=1, 2, 3$, with $d$  the dimension of the Euclidean space. Instead, the ratio of variance and
scale (to the power of a function of  the smoothing parameter), sometimes called
microergodic parameter  is
consistently estimable.
This follows  from   results given in   \cite{Zhang:2004} for the  MT model  and   \cite{Bevilacqua_et_al:2016}  for the GW model.


Asymptotic results for ML estimation of the microergodic parameter of
the MT model can be found in \cite{Zhang:2004},
\cite{Du:Zhang:Mandrekar:2009}, \cite{Wang:Loh:2011}  when the scale parameter is assumed known and fixed.
 \cite{Shaby:Kaufmann:2013}
give strong consistency and asymptotic distribution of the
microergodic parameter when estimating jointly the scale and the
variance parameters and
by means of    a simulation study
they show that the asymptotic  approximation is considerably improved in this case.
Similar results for the microergodic parameter of the GW model can be found in  \cite{Bevilacqua_et_al:2016}.

In terms of prediction, under fixed domain asymptotic,
 \cite{Stein:1988,Stein:1990}
provides conditions under which optimal predictions under a misspecified
covariance function are asymptotically efficient, and mean square
errors converge almost surely to their targets. Stein's conditions translates into the fact
that the true and the misspecified covariances must be compatible,
that is the induced Gaussian measures are equivalent \citep{Sko:ya:1973,Ibragimov-Rozanov:1978}.
A weaker condition, based on ratio  of  spectral densities,
is given in  \cite{Stein:1993}.



In this paper we study ML estimation and prediction of
Gaussian processes, under fixed domain asymptotics, using Generalized Cauchy (GC) covariance model.
GC family of covariance models has been proposed in \cite{Gneiting:Schlather:2004} and deeply studied in \cite{LT2009}.
It is particularly attractive because
 Gaussian processes with such covariance function
allow for any combination
of fractal dimension and Hurst coefficient,
  an appealing feature in many physical, biological or  geological
  systems (see \cite{Gneiting:SS:2012}   and \cite{Gneiting:Schlather:2004} and the references therein).

In particular, we offer the following results.  First, we characterize
the equivalence of two Gaussian measures with covariance functions
belonging to the GC family and sharing the same smoothness parameter.
A consequence of this result is that, as  in MT and GW  covariance models, when the smoothness parameter is known and fixed,
not all parameters can be estimated consistently, under fixed domain
asymptotics.
Then we give sufficient conditions for the equivalence
of two Gaussian measures where the state of truth is represented by a
member of the MT or GC family and the other Gaussian measure has a GC
covariance model.

We then assess the asymptotic properties of the ML estimator of the
microergodic parameter associated with the GC family.  Specifically, for
a fixed smoothness parameter, we establish strong consistency and
asymptotic distribution of the microergodic parameter assuming the
scale parameter fixed and known.  Then, we generalize these
results when jointly estimating with ML the variance and the scale parameter.

Finally, using results in  \cite{Stein:1988} and  \cite{Stein:1993}, we study the
implications of our results on prediction, under fixed domain
asymptotics.
 One remarkable implication is that when the true
covariance belongs to the GC family, asymptotic efficiency
prediction and asymptotically correct estimation of mean square error
can be achieved using a compatible compactly supported GW covariance model.

The remainder of the paper is organized as follows.  In Section 2 we
review some results about MT, GW and GC covariance models.  In
Section 3 we first characterize the equivalence of Gaussian measure
under the GC covariance model.  Then we give  sufficient conditions
for the equivalence of two Gaussian measures with MT and GC and
 two Gaussian measures with
 GW and GC
covariance models.  In Section 4 we establish strong consistency and
asymptotic distribution of the ML estimation of the microergodic
parameter of the GC models, under fixed domain asymptotics.  Section 5
discuss the consequences of our results in terms of prediction,
under fixed domain asymptotics.  Section 6 provides two simulation
studies: the first show how well the given asymptotic distribution of
the microergodic parameter apply to finite sample cases, when
estimating with ML a GC covariance model under fixed domain
asymptotics.  The second compare the finite-sample behavior of the
prediction when using two compatible GC and GW models, when the true model is GC.
The final Section provides a discussion on the
consequence of our results and open problems for future research.

\section{Mat{\'e}rn, Generalized Wendland and Generalized Cauchy covariance models}

This section depicts the main features of the three covariance models involved in the paper.
We denote $\{Z(\ss), \ss \in D \} $ a zero mean Gaussian stochastic process on a
bounded set $D$ of $\R^d$, with stationary covariance function $C:\R^d
\to \R$. We consider the family $\Phi_d$ of continuous mappings
$\phi:[0,\infty) \to \R$ with $0<\phi(0) <\infty$, such that
\begin{equation*} 
{\rm cov} \left ( Z(\ss), Z(\ss^{\prime}) \right )= C(\ss^{\prime}-\ss)=  \phi(\|\ss^{\prime } -\ss \|),
\end{equation*}
with $\ss,\ss^{\prime} \in D$, and $\|\cdot\|$ denoting the Euclidean norm. Gaussian processes with such covariance functions are called weakly stationary and isotropic.

\cite{Shoe38} characterized the family $\Phi_d$ as being scale mixtures
of the characteristic functions of random vectors uniformly
distributed on the spherical shell of $\R^d$, with any probability
measure, $F$:
$$ \phi(r)= \int_{0}^{\infty} \Omega_{d}(r \xi) F(\d \xi), \qquad r \ge 0,$$
with $\Omega_{d}(r)= r^{-(d-2)/2}J_{(d-2)/2}(r)$ and $J_{\nu}$ is  Bessel function of the first kind of order $\nu$.
The family $\Phi_d$ is nested, with the inclusion relation $\Phi_{1} \supset \Phi_2 \supset \ldots \supset \Phi_{\infty}$ being strict, and where $\Phi_{\infty}:= \bigcap_{d \ge 1} \Phi_d$ is the family of mappings $\phi$ whose radial version is positive definite on any $d$-dimensional Euclidean space.

The MT function, defined as:
\begin{equation*}
{\cal M}_{\nu,\alpha,\sigma^2}(r)=
  \sigma^2 \frac{2^{1-\nu}}{\Gamma(\nu)} \left (\frac{r}{\alpha}
  \right )^{\nu} {\cal K}_{\nu} \left (\frac{r}{\alpha} \right ),
  \qquad r \ge 0,
\end{equation*}
is a member of the family
$\Phi_{\infty}$ for any positive values of $\alpha$ and $\nu$. Here,
${\cal K}_{\nu}$ is a modified Bessel function of the second kind of
order $\nu$, $\sigma^2$ is the variance and $\alpha$ a positive
scaling parameter.


We also define $\Phi_{d}^{b}$ as the family that consists of members of $\Phi_d$ being additionally compactly supported on a given interval, $[0,b]$, $b>0$. Clearly, their radial versions are compactly supported over balls of $\R^d$ with radius $b$.
For a given $\kappa>0$, the  GW  correlation function is defined as
\citep{Bevilacqua_et_al:2016,Gneiting:2002}:
\begin{equation} \label{WG2*}
\varphi_{\mu,\kappa,\beta,\sigma^2}(r)= \begin{cases}  \frac{\sigma^2}{B(2\kappa,\mu+1)} \int_{r/\beta}^{1} u(u^2-(r/\beta)^2)^{\kappa-1} (1-u)^{\mu}\,\d u  ,& 0 \leq r/\beta < 1,\\ 0,&r/\beta \geq 1, \end{cases}
\end{equation}
where $B$ denotes the beta function,  $\sigma^2$ is the variance
 and $\beta>0$ is the compact support.
Equivalent representations of  (\ref{WG2*}) in terms of  Gauss hypergeometric function or Legendre polynomials
are given in \cite{Hubb12}.
Closed form solutions of integral~\eqref{WG2*} can be obtained when
$\kappa=k$ with $k\in \N$, the so called  \emph{original Wendland} functions \citep{Wendland:1995}, and, using some results in \cite{Schaback:2011}, when
$\kappa=k+0.5$, the so called \emph{missing Wendland} functions.
Arguments in \cite{Gneiting:2002} and  \cite{Zastavnyi2006} show that, for a given $\kappa>0$, $\varphi_{\mu,\kappa,\beta,\sigma^2} \in \Phi^{\beta}_d$
if and only if
$\mu \ge  (d+1)/2+ \kappa$.
 In this special case $\kappa=0$
the GW  correlation function is defined as:
\begin{equation*}
\varphi_{\mu,0,\beta,\sigma^2}(r)=\left ( 1- r/\beta \right )_{+}^{\mu} =   \begin{cases}   \left ( 1- r \right )^{\mu} ,& 0 \leq r/\beta  < 1,\\ 0,&r/\beta \geq 1, \end{cases}
\end{equation*}
and arguments in \citep{Golubov:1981} show that $\varphi_{\mu,0,\beta,\sigma^2} \in \Phi_d^1$ if and only if $\mu \ge (d+1)/2$.


 The parameters $\nu>0$ and  $\kappa\geq 0$ are crucial for the
differentiability at the origin and, as a consequence, for the degree of the
differentiability of the associated sample paths in the MT and GW models. In particular for a
positive integer $k$, the sample paths  of a Gaussian process are $k$ times differentiable
if and only if  $\nu >k$ in the MT case
and if and only if  $\kappa >k-1/2$ in the GW case.



The smoothness of a Gaussian process can also be described via the
Hausdorff or fractal dimension of a sample path.
The fractal dimension $D \in [d,d+1) $ is a measure of the roughness for
non-differentiable Gaussian processes
and higher values indicating rougher surfaces.
For a given covariance function $\phi \in  \Phi_d$
if
$1-\phi(r)\sim r^{\chi}$ as $r\to 0$
for some $\chi \in (0, 2]$ then the  sample paths of the associated  random process have fractal dimension
$D=d+1-\chi/2$. Here $\chi$ is the so
called fractal index that governs the roughness of sample paths of a
non-differentiable 
Gaussian process.

In the case of a MT model
$\chi=2\nu$ so $D=d+1-\nu$ if $0<\nu<1$ and $d$ otherwise
\citep{Adler:1981,Gneiting:SS:2012}. Thus the MT model permits the full range of
allowable values for the fractal dimension.
In the case of GW family $\chi=2\kappa+1$, so that in this case  $D=d+0.5-\kappa$ if $ 0\leq \kappa<0.5$ and  $d$ otherwise.
Thus  the  GW model does not allow to cover the full range of allowable values for the fractal dimension.

Long-memory dependence
can be defined trough the asymptotic behavior of the covariance function at infinity. Specifically,  for a given covariance function $\phi \in  \Phi_d$, if the power-law
$\phi(r) \sim r^{- \varepsilon +d}$ as $r \to \infty$
holds for some $ \varepsilon \in (0,d]$ the stochastic process is said to have long memory with Hurst coefficient $H=\varepsilon/2$.
MT and GW covariance models do not posses this feature.



A celebrated family of members of $\Phi_{\infty}$ is the GC class   \citep{Gneiting:Schlather:2004}, defined as:

\begin{equation} \label{g-cauchy}
{\cal C}_{\delta,\lambda,\gamma,\sigma^2}(r) = \sigma^2 \left ( 1 + (r/\gamma)^{\delta} \right )^{-\lambda/\delta}, \qquad r \ge 0,
\end{equation}
where the conditions $\delta \in (0,2]$ and $\lambda>0, \gamma >0,\sigma^2>0$ are necessary and sufficient for ${\cal C}_{\delta,\lambda,\gamma,\sigma^2}\in\Phi_{\infty}$.
The parameter $\delta$ is crucial for the
differentiability at the origin and, as a consequence, for the
degree of the differentiability of the associated sample paths.
Specifically,  for $\delta=2$, they are infinitely times differentiable
 and they are not differentiable for $\delta \in (0,2)$.

The GC family represents a breaking point with respect to  earlier literature based on the  assumption of self similarity, since it decouples the fractal dimension and the Hurst effect.
Specifically,
the sample paths of the associated stochastic process have fractal dimension $D = d + 1-\delta/2$ for $\delta \in (0,2)$ and if $\lambda \in (0, d]$ it has long memory with Hurst coefficient $H =  \lambda/2$. Thus, $D$ and $H$ may vary independently of each other  \citep{Gneiting:Schlather:2004,LT2009}.

Fourier transforms of radial versions of members of $\Phi_d$, for a given $d$, have a simple expression, as reported in \cite{Stein:1999} and \cite{Yaglom:1987}. For a member $\phi$ of the family $\Phi_d$, we define its isotropic spectral density as
\begin{equation} \label{FT}
 \widehat{\phi}(z)= \frac{z^{1-d/2}}{(2 \pi)^d} \int_{0}^{\infty} u^{d/2} J_{d/2-1}(uz)  \phi(u) {\rm d} u, \qquad z \ge 0, \end{equation}

and through the paper we use the notation $\widehat{{\cal C}}_{\delta,\lambda,\gamma,\sigma^2}$, $\widehat{{\cal M}}_{\nu,\alpha,\sigma^2}$
and $\widehat{\varphi}_{\mu,\kappa,\beta,\sigma^2}$ for the
spectral density
associated with ${{\cal C}}_{\delta,\lambda,\gamma,\sigma^2}$, ${\cal M}_{\nu,\alpha,\sigma^2}$
and $\varphi_{\mu,\kappa,\beta,\sigma^2}$.
A well-known result about the spectral density of the  Mat{\'e}rn model is the following:
\begin{equation} \label{stein1}
\widehat{{\cal M}}_{\nu,\alpha,\sigma^2}(z)= \frac{\Gamma(\nu+d/2)}{\pi^{d/2} \Gamma(\nu)}
\frac{\sigma^2 \alpha^d}{(1+\alpha^2z^2)^{\nu+d/2}}
, \qquad z \geq 0.
\end{equation}
Define the function  $\mathstrut_1 F_2$  as:
\begin{equation*}
\mathstrut_1 F_2(a;b,c;z)=\sum_{k=0}^{\infty}\frac{(a)_{k}z^{k}}{(b)_{k}(c)_{k}k!}, \qquad z \in\mathbb{R},
\end{equation*}
which is a special case of the generalized hypergeometric functions $\mathstrut_q F_p$ \citep{Abra:Steg:70}, with
$(q)_{k}=  \Gamma(q+k)/\Gamma(q)$ for $k \in  \mathbb{N}\cup\{ 0\}$, being the Pochhammer symbol.
The spectral density of $\varphi_{\mu,\kappa,\beta,\sigma^2}$ for  $\kappa\geq0$  is  given by  \citep{Bevilacqua_et_al:2016}:

$$\widehat{\varphi}_{\mu,\kappa,\beta,\sigma^2}(z)=\sigma^{2}L\beta^{d}\mathstrut_1 F_2\Big(\lambda;\lambda+\frac{\mu}{2},\lambda+\frac{\mu}{2}+\frac{1}{2};-\frac{(z\beta)^{2}}4\Big), \quad  z \geq 0$$
where $\lambda= (d+1)/2+ \kappa$,
and
$L=(\Gamma(2\kappa+\mu+1)\Gamma(\kappa)\Gamma(2\kappa+d))(2^{d}\pi^{\frac{d}{2}}\Gamma(\kappa+\frac{d}{2})\Gamma(\mu+2\lambda)\Gamma(2\kappa))^{-1}$.

For two given   functions $g_1(x)$ and $g_2(x)$, with $g_1(x) \asymp g_2(x)$ we mean
that there exist two constants $c$ and $C$
such that $0<c< C<\infty$ and $c|g_2(x)| \leq |g_1(x)| \leq C|g_2(x)|$ for
each $x$.
The next result follows from \cite{LT2009} and describe    the spectral density  of the GC covariance function and its asymptotic behaviour.

\begin{theo}\label{the3} Let ${\cal C}_{\delta,\lambda,\gamma,\sigma^2}(z) $ be the function defined at Equation~\eqref{g-cauchy}. Then, for  $\gamma>0,\sigma^2>0,\lambda>0$ and $\delta \in (0,2)$ :
\begin{enumerate}
\item \begin{equation*}\label{spectrGW}
\widehat{{\cal C}}_{\delta,\lambda,\gamma,\sigma^2}(z) =-\frac{\sigma^2\gamma^{d/2+1}z^{-d}}{2^{d/2-1}\pi^{d/2+1}}Im\int_0^\infty\frac{{\cal K}_{(d-2)/2}(\gamma t)}{(1+\exp(i\frac{\pi\delta}{2})(t/z)^{\delta})^{\lambda/\delta}}t^{d/2}\text{d}t, \qquad  z \geq0.
\end{equation*}
\item
\begin{equation*}\label{spectrGW2}
\widehat{{\cal C}}_{\delta,\lambda,\gamma,\sigma^2}(z) =\varrho z^{-(d+\delta)}- \mathcal{O}(z^{-(d+2\delta)}) \mbox{ }\mbox{ }\hbox{ for } z\to\infty,
\end{equation*}
\item
\begin{equation*}\label{spectrGW3}
\widehat{{\cal C}}_{\delta,\lambda,\gamma,\sigma^2}(z) \asymp z^{-(d+\delta)}\mbox{ }\mbox{ }\hbox{ for }  z\to\infty,
\end{equation*}
\end{enumerate}
where $\varrho=\frac{2^\delta \sigma^2\lambda\Gamma(\frac{\delta+d}{2})\Gamma(\frac{\delta+2}{2})\sin(\frac{\pi\delta}{2})}{\delta\gamma^{\delta}\pi^{\frac{d}{2}+1}}$.
\end{theo}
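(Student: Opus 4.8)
The plan is to read the spectral density off its defining Hankel representation \eqref{FT} and then rotate the integration contour into the complex plane. Writing \eqref{FT} for the GC model gives, up to the multiplicative constant appearing there,
\[
\widehat{\mathcal{C}}_{\delta,\lambda,\gamma,\sigma^2}(z)\ \propto\ z^{1-d/2}\int_0^\infty u^{d/2}J_{d/2-1}(uz)\bigl(1+(u/\gamma)^\delta\bigr)^{-\lambda/\delta}\,\d u .
\]
First I would replace $J_{d/2-1}(uz)$ by $\operatorname{Re}H^{(1)}_{d/2-1}(uz)$, where $H^{(1)}_\nu$ is the Hankel function, which decays exponentially as $\operatorname{Im}(uz)\to+\infty$. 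The factor $\bigl(1+(u/\gamma)^\delta\bigr)^{-\lambda/\delta}$ has branch points at $u=\gamma\,e^{\,i\pi(2k+1)/\delta}$, and since $\delta\in(0,2)$ forces $\pi/\delta>\pi/2$, the nearest of these lies strictly outside the first quadrant; moreover the ray $1+e^{i\pi\delta/2}s$, $s\ge0$, stays off the negative real axis for $\delta<2$, so the rotated integrand is single-valued. Hence the substitution $u\mapsto it$ rotates the contour from the positive real to the positive imaginary axis without crossing any singularity, the quarter-circle at infinity being killed by the decay of $H^{(1)}$. Using $H^{(1)}_{d/2-1}(iy)\propto \mathcal{K}_{(d-2)/2}(y)$ together with $(it/\gamma)^\delta=e^{i\pi\delta/2}(t/\gamma)^\delta$, followed by the real rescaling $t\mapsto \gamma t/z$ (which turns $\mathcal{K}_{(d-2)/2}(tz)$ into $\mathcal{K}_{(d-2)/2}(\gamma t)$ and $(t/\gamma)^\delta$ into $(t/z)^\delta$, producing the factor $\gamma^{d/2+1}z^{-(d/2+1)}$ and hence the overall $z^{-d}$), the accumulated powers of $i$ collapse to a single factor $i$, so taking the real part turns $\operatorname{Re}$ of the rotated integral into $-\operatorname{Im}$ of the stated integrand; collecting the constants then reproduces exactly the representation in part~1.

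For part~2 I would extract the leading large-$z$ behaviour from this representation. Since $\operatorname{Im}$ commutes with the integral and $(t/z)^\delta\to0$ for fixed $t$, I would expand
\[
\bigl(1+e^{i\pi\delta/2}(t/z)^\delta\bigr)^{-\lambda/\delta}=1-\frac{\lambda}{\delta}e^{i\pi\delta/2}\Bigl(\frac{t}{z}\Bigr)^\delta+\mathcal{O}\!\left(\Bigl(\frac{t}{z}\Bigr)^{2\delta}\right).
\]
The constant term is real and contributes nothing to the imaginary part; the first correction, after taking $\operatorname{Im}$, carries a factor $\sin(\pi\delta/2)$ and a factor $z^{-\delta}$ and leaves the Mellin integral $\int_0^\infty \mathcal{K}_{(d-2)/2}(\gamma t)\,t^{d/2+\delta}\,\d t=2^{d/2+\delta-1}\gamma^{-(d/2+\delta+1)}\Gamma\bigl(\tfrac{\delta+2}{2}\bigr)\Gamma\bigl(\tfrac{d+\delta}{2}\bigr)$. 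Multiplying by the prefactor of part~1 and collecting the powers of $2$, $\gamma$, $\pi$ and $z$ reproduces $\varrho\,z^{-(d+\delta)}$, while the quadratic term yields the $\mathcal{O}(z^{-(d+2\delta)})$ remainder. Part~3 is then immediate: every factor in $\varrho$ is strictly positive for $\delta\in(0,2)$ — in particular $\sin(\pi\delta/2)>0$ — so the leading term does not degenerate and $\widehat{\mathcal{C}}_{\delta,\lambda,\gamma,\sigma^2}(z)\asymp z^{-(d+\delta)}$.

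The main obstacle is the rigorous justification of the contour rotation in part~1: one must verify that the quarter-circle contribution at infinity vanishes uniformly in the chosen branch, confirm that no branch point is enclosed, and check convergence of the rotated integral at both endpoints. Near $t=0$ the factor $\mathcal{K}_{(d-2)/2}$ has an integrable singularity of order $t^{-|d-2|/2}$, leaving in part~1 an integrand of order $t$ when $d\ge2$ and a bounded integrand when $d=1$, while the exponential decay of $\mathcal{K}_{(d-2)/2}(\gamma t)$ controls the tail; the same estimates guarantee convergence of the Mellin integral used in part~2. A secondary technical point is the term-by-term integration in part~2, which requires a dominated-convergence argument with a remainder bound uniform in $z$: one splits the range of $t$ into a region where $(t/z)^\delta$ is uniformly small and a tail where the exponential decay of $\mathcal{K}_{(d-2)/2}$ dominates, thereby securing the stated $\mathcal{O}(z^{-(d+2\delta)})$ order.
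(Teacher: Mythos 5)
The paper offers no proof of this theorem at all: it states that the result ``follows from'' the cited reference (Lim and Teo, 2009) and moves on. Your derivation is, in substance, a reconstruction of that reference's argument -- replacing $J_{d/2-1}$ by $\operatorname{Re}H^{(1)}_{d/2-1}$, rotating the contour through the first quadrant onto the imaginary axis, converting $H^{(1)}$ into $\mathcal{K}_{(d-2)/2}$, and then reading off the large-$z$ expansion from the binomial series -- and your computations check out: the branch points $\gamma e^{i\pi(2k+1)/\delta}$ indeed avoid the closed first quadrant for $\delta\in(0,2)$, the accumulated phase is exactly $i$ so that $\operatorname{Re}$ becomes $-\operatorname{Im}$, the Mellin integral $\int_0^\infty \mathcal{K}_{(d-2)/2}(\gamma t)\,t^{d/2+\delta}\,\d t=2^{d/2+\delta-1}\gamma^{-(d/2+\delta+1)}\Gamma\bigl(\tfrac{\delta+2}{2}\bigr)\Gamma\bigl(\tfrac{d+\delta}{2}\bigr)$ is correct, and assembling the constants yields precisely the stated $\varrho$, with positivity of $\sin(\pi\delta/2)$ giving part~3.

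Two caveats you should be aware of. First, a bookkeeping point that is the paper's fault rather than yours: your claim that ``collecting the constants reproduces exactly the representation in part~1'' is true only under the standard radial Hankel normalization $\widehat{\phi}(z)=(2\pi)^{-d/2}z^{1-d/2}\int_0^\infty u^{d/2}J_{d/2-1}(uz)\phi(u)\,\d u$; the paper's Equation~\eqref{FT} as printed carries $(2\pi)^{-d}$, which would leave your answer short by a factor $(2\pi)^{d/2}$. The constant in part~1 (and the Mat{\'e}rn density~\eqref{stein1} used elsewhere in the paper) is consistent with the standard normalization, so \eqref{FT} contains a typo; your derivation silently confirms this. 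Second, the theorem is asserted for all $\lambda>0$, but your Jordan-lemma estimate kills the quarter-circle only when $\lambda>(d-1)/2$; for smaller $\lambda$ the defining integral \eqref{FT} does not even converge conditionally (the paper itself remarks that for $\lambda\le d$ the transform must be understood in a generalized sense). The clean repair, which you do not state, is to run your argument for $\lambda>d$, where everything converges absolutely, and then extend parts~1 and~2 to all $\lambda>0$ by analytic continuation in $\lambda$, both sides of the identity being analytic in that parameter. With that addition your proof is complete and is essentially the proof the paper outsources to the literature.
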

The existence of the  spectral density  (\ref{FT})
 is guaranteed
if the integral on the right part  of  (\ref{FT}) is convergent. If the integral does not converge, a generalized covariance function should be considered
and  the spectral density must be defined as   the Fourier transform of a  covariance function in the Schwartz space of test
functions \citep{ojala}. \cite{LT2009}   show that if $\lambda \in (0, d]$  , $i.e.$ under long range depeendence,  $\widehat{{\cal C}}_{\delta,\lambda,\gamma,\sigma^2}(z)$ diverge when $z\to 0^+$.

\section{Equivalence of Gaussian measures with Generalized Cauchy, Mat{\'e}rn and Generalized  Wendland covariance models}\label{sec:EquiMeasure}

Equivalence and orthogonality of probability measures are useful tools when assessing the asymptotic properties of both prediction and estimation for stochastic processes.
Denote with $P_i$, $i=0,1$, two probability measures defined on the same
 measurable space $\{\Omega, \cal F\}$. $P_0$ and $P_1$ are called equivalent (denoted $P_0 \equiv P_1$) if $P_1(A)=1$ for any $A\in \cal F$ implies $P_0(A)=1$ and vice versa. On the other hand,  $P_0$ and $P_1$ are orthogonal (denoted $P_0 \perp P_1$) if there exists an event $A$ such that $P_1(A)=1$ but $P_0(A)=0$. For a stochastic process $\{ Z(\ss), \ss \in \R^d \}$, to define previous concepts, we restrict the event $A$ to the $\sigma$-algebra generated by $\{Z(\ss), \ss\in D\}$ where $D \subset \R^d$. We emphasize this restriction by saying that the
two measures are equivalent on the paths of $\{Z(\ss), \ss\in D\}$.

 Gaussian measures are completely characterized by their mean and covariance function.
We write $P(\rho)$ for a Gaussian measure with zero mean and covariance function $\rho$.  It is well known that two Gaussian measures  are either equivalent or orthogonal on the paths of $\{Z(\ss), \ss\in D\}$ \citep{Ibragimov-Rozanov:1978}.

Let $P(\rho_i)$, $i=0, 1$ be two zero mean Gaussian measures with  isotropic covariance function  $\rho_i$ and  associated spectral density $\widehat{\rho}_i$,  $i=0, 1$, as defined through~\eqref{FT}.
Using results in  \cite{Sko:ya:1973} and  \cite{Ibragimov-Rozanov:1978},  \cite{Stein:2004}
has shown that, if for some $a>0$,
$\widehat{\rho}_0(z)z^a$ is bounded away from 0 and $\infty$ as $z \to \infty$, and
for some finite and positive~$c$,
\begin{equation}\label{spectralfinite2}
\int_{c}^{\infty} z^{d-1} \;\left\{ \frac{\widehat{\rho}_1(z)-\widehat{\rho}_0(z)}{\widehat{\rho}_0(z)} \right\}^2\;
{\rm d} z <\infty,
\end{equation}
then for any bounded subset
$D\subset \R^d$, $P(\rho_0)\equiv P(\rho_1)$ on the paths of $Z(\ss), \ss \in D$ .
For the rest of the paper, we denote with $P({\cal M}_{\nu,\alpha,\sigma^2})$, $P(\varphi_{\mu,\kappa,\beta,\sigma^2})$,
$P({\cal C}_{\delta,\lambda,\gamma,\sigma^2})$
a zero mean Gaussian measure induced by a MT, GW and GC covariance function respectively.
The following Theorem is due to  \cite{Zhang:2004}. It  characterizes the compatibility of two MT covariance models
sharing a common smoothness parameter $\nu$.
\begin{theo} \label{Thm2}
For a given $\nu>0$,
let $P({\cal M}_{\nu,\alpha_i,\sigma^2_i})$,  $i=0, 1$, be two zero mean Gaussian  measures. For any
bounded infinite set $D\subset \R^d$, $d=1, 2, 3$,
$P({\cal M}_{\nu,\alpha_0,\sigma^2_0}) \equiv P( {\cal M}_{\nu,\alpha_1,\sigma^2_1})$ on the paths of $Z(\ss), \ss \in D$, if and only if
\begin{equation}\label{condmat}
\frac{\sigma_0^2}{ \alpha_0^{2\nu}}=\frac{\sigma_1^2}{ \alpha_1^{2\nu}}.
\end{equation}
\end{theo}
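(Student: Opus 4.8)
The plan is to exploit the spectral-density form \eqref{stein1} of the Mat\'ern model together with the dichotomy that two Gaussian measures on the paths of $\{Z(\ss),\ss\in D\}$ are either equivalent or orthogonal. The sufficiency direction (that condition \eqref{condmat} implies equivalence) will follow by verifying Stein's integral criterion \eqref{spectralfinite2}; the necessity direction (that equivalence implies \eqref{condmat}) will follow by showing that when \eqref{condmat} fails the two measures are orthogonal, which by the dichotomy is all that remains.

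First I would record the tail behaviour of the Mat\'ern spectral density. From \eqref{stein1}, writing $c_\nu=\Gamma(\nu+d/2)/(\pi^{d/2}\Gamma(\nu))$ and using $(1+\alpha^2z^2)^{\nu+d/2}\sim \alpha^{2\nu+d}z^{2\nu+d}$ as $z\to\infty$, one obtains
$$\widehat{\mathcal M}_{\nu,\alpha,\sigma^2}(z)\sim c_\nu\,\frac{\sigma^2}{\alpha^{2\nu}}\,z^{-(2\nu+d)},\qquad z\to\infty .$$
In particular $\widehat{\mathcal M}_{\nu,\alpha_0,\sigma_0^2}(z)\,z^{2\nu+d}$ is bounded away from $0$ and $\infty$, so the boundedness hypothesis preceding \eqref{spectralfinite2} holds with $a=2\nu+d$, and the quantity governing the tail is exactly $\sigma^2/\alpha^{2\nu}$.

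For sufficiency, assume \eqref{condmat} and expand the ratio of the two spectral densities: with $K=\sigma_1^2\alpha_1^d/(\sigma_0^2\alpha_0^d)$,
$$\frac{\widehat{\mathcal M}_{\nu,\alpha_1,\sigma_1^2}(z)}{\widehat{\mathcal M}_{\nu,\alpha_0,\sigma_0^2}(z)}=K\left(\frac{1+\alpha_0^2z^2}{1+\alpha_1^2z^2}\right)^{\nu+d/2}.$$
Writing $(1+\alpha_0^2z^2)/(1+\alpha_1^2z^2)=(\alpha_0^2/\alpha_1^2)\{1+(\alpha_0^{-2}-\alpha_1^{-2})z^{-2}+O(z^{-4})\}$ and raising to the power $\nu+d/2$, the leading constant becomes $K(\alpha_0/\alpha_1)^{2\nu+d}=(\sigma_1^2/\alpha_1^{2\nu})/(\sigma_0^2/\alpha_0^{2\nu})$, which equals $1$ precisely under \eqref{condmat}. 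Hence
$$\frac{\widehat{\mathcal M}_{\nu,\alpha_1,\sigma_1^2}(z)-\widehat{\mathcal M}_{\nu,\alpha_0,\sigma_0^2}(z)}{\widehat{\mathcal M}_{\nu,\alpha_0,\sigma_0^2}(z)}=O(z^{-2}),$$
so the integrand in \eqref{spectralfinite2} is $O(z^{d-1}z^{-4})=O(z^{d-5})$, whose integral over $[c,\infty)$ converges if and only if $d<4$. This is exactly where the restriction $d=1,2,3$ enters, and Stein's criterion \eqref{spectralfinite2} then yields $P(\mathcal M_{\nu,\alpha_0,\sigma_0^2})\equiv P(\mathcal M_{\nu,\alpha_1,\sigma_1^2})$.

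For necessity I would argue the contrapositive: if \eqref{condmat} fails, the measures are orthogonal. By the tail expansion above, $\sigma^2/\alpha^{2\nu}$ is a \emph{microergodic} parameter, being determined by $\lim_{z\to\infty}z^{2\nu+d}\widehat{\mathcal M}_{\nu,\alpha,\sigma^2}(z)$, so I would exhibit a statistic (for instance a suitably normalized quadratic form in high-frequency increments of the observed path) that converges almost surely, under $P(\mathcal M_{\nu,\alpha_i,\sigma_i^2})$, to a constant multiple of $\sigma_i^2/\alpha_i^{2\nu}$. When the two values differ, the event that this statistic takes the value associated with one measure has probability one under that measure and zero under the other, forcing orthogonality; combined with the dichotomy this proves the claim. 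The main obstacle is precisely this necessity step: the criterion \eqref{spectralfinite2} is only sufficient and cannot be reversed, so establishing orthogonality requires the separate and more delicate construction of an almost-surely convergent microergodic estimator rather than a manipulation of spectral densities.
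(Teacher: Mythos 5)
The paper itself gives no proof of this theorem: it is stated with an attribution to Zhang (2004), and the necessity argument is only ever invoked by reference (also later, in Theorem~\ref{C_vs_C}, where the paper says the necessary part ``follows the arguments in the proof of Zhang''). So your proposal must be judged against that standard proof. Your sufficiency half is correct and is essentially the cited argument: the expansion of the spectral ratio showing that the leading constant is $(\sigma_1^2/\alpha_1^{2\nu})/(\sigma_0^2/\alpha_0^{2\nu})$, hence equal to $1$ under \eqref{condmat}, a relative error of order $O(z^{-2})$, and an integrand in \eqref{spectralfinite2} of order $O(z^{d-5})$, integrable precisely when $d<4$ --- which is exactly where the restriction $d=1,2,3$ enters.

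The necessity half, however, has a genuine gap. You assert the existence of ``a suitably normalized quadratic form in high-frequency increments'' converging almost surely to a constant multiple of $\sigma_i^2/\alpha_i^{2\nu}$ under each measure, but you never construct it or prove its convergence; that construction \emph{is} the content of the necessity direction, and for arbitrary $\nu>0$ (where the paths can be several times differentiable, so ordinary quadratic variation degenerates and higher-order increments are needed) it is delicate. The standard way to close the gap --- and what Zhang actually does --- uses the sufficiency half you have already proved, so no new microergodic statistic for the scale parameter is required. Suppose $\sigma_0^2/\alpha_0^{2\nu}\neq\sigma_1^2/\alpha_1^{2\nu}$ and set $\tilde{\sigma}_1^2=\sigma_1^2(\alpha_0/\alpha_1)^{2\nu}$. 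By sufficiency, $P(\mathcal{M}_{\nu,\alpha_1,\sigma_1^2})\equiv P(\mathcal{M}_{\nu,\alpha_0,\tilde{\sigma}_1^2})$. If the two given measures were equivalent, transitivity would give $P(\mathcal{M}_{\nu,\alpha_0,\sigma_0^2})\equiv P(\mathcal{M}_{\nu,\alpha_0,\tilde{\sigma}_1^2})$ with $\tilde{\sigma}_1^2\neq\sigma_0^2$, i.e.\ two Gaussian measures sharing the same correlation function but with different variances. These are orthogonal by an elementary argument: taking any dense sequence of distinct points in $D$ with common correlation matrix $R_n$, the innovations (Cholesky) decomposition shows that $\bZ_n^{\prime}R_n^{-1}\bZ_n$ has independent $\chi^2_1$ increments scaled by the true variance, so $\bZ_n^{\prime}R_n^{-1}\bZ_n/n$ converges almost surely to $\sigma_0^2$ under one measure and to $\tilde{\sigma}_1^2$ under the other. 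Combined with the equivalence--orthogonality dichotomy this yields a contradiction, proving that \eqref{condmat} is necessary. This route replaces your unconstructed increment statistic with a concrete quadratic form whose almost-sure limit is trivial to identify.
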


The following Theorem is a generalization of Theorem 4 in  \cite{Bevilacqua_et_al:2016} and it characterizes the compatibility of two GW covariance  models
sharing a common smoothness parameter $\kappa$.
We omit the proof since the result can be obtained using the same arguments.

\begin{theo} \label{W_vs_W}
 For a given $\kappa \ge 0$, let $P( \varphi_{\mu_i,\kappa,\beta_i,\sigma^2_i})$,  $i=0, 1$, be two zero mean Gaussian  measures and
 let $\mu_i > d+\kappa+1/2$.
 For any
bounded infinite set $D\subset \R^d$, $d=1, 2, 3$,
$P( \varphi_{\mu_0,\kappa,\beta_0,\sigma^2_0}) \equiv P( \varphi_{\mu_1,\kappa,\beta_1,\sigma^2_1})$ on the paths of $Z(\ss), \ss \in D$ if and only if
\begin{equation} \label{condition1_iff}
\frac{\sigma_0^2}{\beta_0^{2 \kappa+1}}\mu_0 =\frac{\sigma_1^2}{\beta_1^{2 \kappa+1}}\mu_1.
\end{equation} \end{theo}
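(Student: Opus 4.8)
The plan is to apply the spectral equivalence criterion \eqref{spectralfinite2} of \cite{Stein:2004} to the two GW spectral densities, following the proof of Theorem~4 in \cite{Bevilacqua_et_al:2016} but now with the common parameter being an arbitrary $\kappa \ge 0$ rather than an integer. The entire argument rests on a sharp two-term tail expansion of the GW spectral density. Starting from the closed form
$$\widehat{\varphi}_{\mu,\kappa,\beta,\sigma^2}(z)=\sigma^{2}L\beta^{d}\,\mathstrut_1 F_2\Big(\lambda;\lambda+\tfrac{\mu}{2},\lambda+\tfrac{\mu}{2}+\tfrac{1}{2};-\tfrac{(z\beta)^{2}}{4}\Big),\qquad \lambda=\tfrac{d+1}{2}+\kappa,$$
I would first derive the asymptotics of $\mathstrut_1 F_2$ for large negative argument, whose natural expansion variable is $z^{-2}$, and show that
$$\widehat{\varphi}_{\mu,\kappa,\beta,\sigma^2}(z)=\varrho_{\mathrm{GW}}\,z^{-(d+2\kappa+1)}+O\big(z^{-(d+2\kappa+3)}\big),\qquad z\to\infty,$$
where the leading coefficient $\varrho_{\mathrm{GW}}$ is proportional to the microergodic combination $\sigma^{2}\mu/\beta^{2\kappa+1}$, with a proportionality factor depending only on $d$ and $\kappa$. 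This mirrors the structure of the GC expansion in Theorem~\ref{the3} and immediately supplies the first hypothesis of \eqref{spectralfinite2}: with $a=d+2\kappa+1>0$, the product $\widehat{\varphi}_{\mu_0,\kappa,\beta_0,\sigma^2_0}(z)\,z^{a}$ stays bounded away from $0$ and $\infty$.

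For sufficiency, assume \eqref{condition1_iff}. Then $\varrho_{\mathrm{GW}}$ coincides for $i=0,1$, so the leading $z^{-(d+2\kappa+1)}$ terms cancel and the difference of the two densities is $O\big(z^{-(d+2\kappa+3)}\big)$, two orders faster than either density. Hence
$$\frac{\widehat{\varphi}_{\mu_1,\kappa,\beta_1,\sigma^2_1}(z)-\widehat{\varphi}_{\mu_0,\kappa,\beta_0,\sigma^2_0}(z)}{\widehat{\varphi}_{\mu_0,\kappa,\beta_0,\sigma^2_0}(z)}=O\big(z^{-2}\big),$$
so the integrand of \eqref{spectralfinite2} is $O(z^{d-1-4})=O(z^{d-5})$, which is integrable on $[c,\infty)$ exactly when $d<4$. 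This is precisely where the restriction $d=1,2,3$ enters, and \eqref{spectralfinite2} then yields $P(\varphi_{\mu_0,\kappa,\beta_0,\sigma^2_0})\equiv P(\varphi_{\mu_1,\kappa,\beta_1,\sigma^2_1})$ on the paths over any bounded $D$.

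For necessity, suppose \eqref{condition1_iff} fails. I would use the dichotomy that two Gaussian measures are either equivalent or orthogonal, together with the consistent estimability of the microergodic parameter $\sigma^{2}\mu/\beta^{2\kappa+1}$ established in \cite{Bevilacqua_et_al:2016}: if the measures were equivalent, a consistent estimator would converge almost surely, under both measures, to one and the same limit, contradicting that the two microergodic values differ. Hence the measures are orthogonal. The main obstacle I anticipate is the rigorous derivation of the two-term expansion of $\mathstrut_1 F_2$ with the correct leading constant $\varrho_{\mathrm{GW}}$ and, crucially, the uniform control showing the remainder is genuinely $O\big(z^{-(d+2\kappa+3)}\big)$; it is this second-order bound, not the leading term alone, that produces the exponent $d-5$ and hence the constraint $d\le 3$. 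The hypothesis $\mu_i>d+\kappa+1/2$ should be tracked throughout, as its role is to keep the coefficients in this expansion finite and to guarantee the stated decay rate.
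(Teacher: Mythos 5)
Your overall strategy is exactly the one the paper relies on (the paper omits the proof and defers to the arguments of Theorem 4 of \cite{Bevilacqua_et_al:2016}): expand the GW spectral density at infinity, feed the expansion into Stein's criterion \eqref{spectralfinite2} for sufficiency, and obtain necessity from the equivalent-or-orthogonal dichotomy via a Zhang-type argument. However, your pivotal technical claim fails, and it fails in a way that matters for this particular statement. Carrying out the large-argument expansion of $\mathstrut_1 F_2$ with the paper's constant $L$, the algebraic part of the tail is
\begin{equation*}
\widehat{\varphi}_{\mu,\kappa,\beta,\sigma^2}(z)\sim \frac{\Gamma(2\kappa+\mu+1)}{\Gamma(\mu)}\cdot\frac{\Gamma(\kappa)\Gamma(2\kappa+d)}{2^{d}\pi^{d/2}\Gamma(\kappa+d/2)\Gamma(2\kappa)}\cdot\frac{\sigma^2}{\beta^{2\kappa+1}}\,z^{-(d+2\kappa+1)},
\end{equation*}
so the $\mu$-dependence of the leading coefficient is $\Gamma(2\kappa+\mu+1)/\Gamma(\mu)=\mu\,\Gamma(2\kappa+\mu+1)/\Gamma(\mu+1)$, \emph{not} $\mu$ times a factor depending only on $(d,\kappa)$. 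The two agree only when $\kappa=0$ or when $\mu_0=\mu_1$, in which case the extra factors cancel between the two measures. Since the theorem is precisely the generalization to $\mu_0\neq\mu_1$, your sufficiency argument, done correctly, yields equivalence under $\Gamma(2\kappa+\mu_0+1)\Gamma(\mu_0)^{-1}\sigma_0^2\beta_0^{-(2\kappa+1)}=\Gamma(2\kappa+\mu_1+1)\Gamma(\mu_1)^{-1}\sigma_1^2\beta_1^{-(2\kappa+1)}$ rather than under the stated condition $\sigma_0^2\mu_0/\beta_0^{2\kappa+1}=\sigma_1^2\mu_1/\beta_1^{2\kappa+1}$. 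You can cross-check this against the paper itself: matching each GW measure to a Mat{\'e}rn measure through Theorem~\ref{matgw} (condition \eqref{cafu}) and using transitivity of equivalence produces exactly the Gamma-ratio condition, as does the limit computed in the proof of Theorem~\ref{kauf_3}, Point 3. So your claimed coefficient, if it were true, would contradict Theorem~\ref{matgw}; the discrepancy lies between the present theorem's statement (which silently drops the factors $\Gamma(2\kappa+\mu_i+1)/\Gamma(\mu_i+1)$) and the rest of the paper, and your proof reproduces the stated condition only by asserting an asymptotic coefficient that the computation does not give.

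A second, smaller inaccuracy: the GW tail is not ``algebraic term plus $O(z^{-(d+2\kappa+3)})$''. It also contains an oscillatory component of order $z^{-(\mu+(d+1)/2+\kappa)}\cos(z\beta-c_5)$, which does not cancel in the difference of the two densities (the $\beta_i$ differ) and which, for $\mu$ near its lower bound, decays more slowly than your claimed remainder. The hypothesis $\mu_i>d+\kappa+1/2$ is not there to ``keep the coefficients finite'': it is exactly the condition $\mu-\bigl((d+1)/2+\kappa\bigr)>d/2$ ensuring that the oscillatory contribution to the integrand of \eqref{spectralfinite2} is square-integrable against $z^{d-1}$; the restriction $d\le 3$ then comes from the $O(z^{-2})$ correction to the algebraic part, as you say. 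Finally, your necessity argument (consistency of the microergodic estimator plus the dichotomy) is structurally sound and is the Zhang-type argument the paper intends, but it inherits the same coefficient problem: the quantity that is consistently estimable, and therefore separates orthogonal measures, is the Gamma-ratio combination above, not $\sigma^2\mu/\beta^{2\kappa+1}$.
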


The first relevant result of this paper concerns the characterization of the compatibility of two GC functions sharing a common smoothness parameter.

\begin{theo} \label{C_vs_C}
For a
given $\delta \in (d/2,2)$, let  $P({\cal C}_{\delta,\lambda_i,\gamma_i,\sigma_i^2})$  $i=0, 1$ be two zero mean Gaussian  measures. For any
bounded infinite set $D\subset \R^d$, $d=1, 2, 3$, $P( {\cal C}_{\delta,\lambda_0,\gamma_0,\sigma_0^2}) \equiv P({\cal C}_{\delta,\lambda_1,\gamma_1,\sigma_1^2})$ on the paths of $Z(\ss), \ss \in D$ if and only if    \begin{equation} \label{condition1_iff}
\frac{\sigma_0^2}{\gamma_0^{\delta}} \lambda_0 = \frac{\sigma_1^2}{\gamma_1^{\delta}} \lambda_1.
\end{equation}

\end{theo}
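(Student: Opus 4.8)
The entire argument rests on the tail expansion recorded in Theorem~\ref{the3}. From the closed form of $\varrho$ one sees that the leading coefficient in $\widehat{{\cal C}}_{\delta,\lambda,\gamma,\sigma^2}(z)\sim\varrho\,z^{-(d+\delta)}$ depends on $(\lambda,\gamma,\sigma^2)$ only through the combination $\sigma^2\lambda/\gamma^{\delta}$, every remaining factor being a function of $(\delta,d)$ alone. Writing $\varrho_i$ for the leading coefficient attached to ${\cal C}_{\delta,\lambda_i,\gamma_i,\sigma_i^2}$, the microergodic identity \eqref{condition1_iff} is therefore \emph{exactly} the statement $\varrho_0=\varrho_1$. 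The plan is to show that equivalence of the two measures is equivalent to this matching of leading coefficients.

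For the ``if'' direction I would apply Stein's criterion \eqref{spectralfinite2} with $\rho_0={\cal C}_{\delta,\lambda_0,\gamma_0,\sigma_0^2}$ and $\rho_1={\cal C}_{\delta,\lambda_1,\gamma_1,\sigma_1^2}$. The boundedness hypothesis is immediate from part~3 of Theorem~\ref{the3}, since $\widehat{\rho}_0(z)z^{d+\delta}$ is bounded away from $0$ and $\infty$, so $a=d+\delta$ serves. For the integral, parts~2--3 give, under $\varrho_0=\varrho_1$, the cancellation $\widehat{\rho}_1(z)-\widehat{\rho}_0(z)=(\varrho_1-\varrho_0)z^{-(d+\delta)}+\mathcal{O}(z^{-(d+2\delta)})=\mathcal{O}(z^{-(d+2\delta)})$, whence $(\widehat{\rho}_1-\widehat{\rho}_0)/\widehat{\rho}_0=\mathcal{O}(z^{-\delta})$ and the integrand in \eqref{spectralfinite2} is $\mathcal{O}(z^{d-1-2\delta})$. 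This is integrable at infinity precisely when $d-1-2\delta<-1$, i.e. when $\delta>d/2$, which is the standing hypothesis. This step is a routine computation.

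The ``only if'' direction is the crux, since \eqref{spectralfinite2} is merely sufficient. My plan is a bridge to the Mat\'ern family, exploiting that $\widehat{{\cal M}}_{\nu,\alpha,\sigma^2}$ with $\nu=\delta/2$ has the same tail exponent $z^{-(d+\delta)}$ as the GC density. First I would establish, by the same Stein computation applied to a GC and a Mat\'ern density, that for each $i$ there is a Mat\'ern measure $P({\cal M}_{\delta/2,\alpha_i,\tau_i^2})$ with $P({\cal C}_{\delta,\lambda_i,\gamma_i,\sigma_i^2})\equiv P({\cal M}_{\delta/2,\alpha_i,\tau_i^2})$: one fixes $\alpha_i$ and solves for $\tau_i^2$ so that the Mat\'ern leading coefficient $\frac{\Gamma(\delta/2+d/2)}{\pi^{d/2}\Gamma(\delta/2)}\,\tau_i^2/\alpha_i^{\delta}$ equals $\varrho_i$. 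Granting these bridges, if the two GC measures are equivalent then transitivity gives $P({\cal M}_{\delta/2,\alpha_0,\tau_0^2})\equiv P({\cal M}_{\delta/2,\alpha_1,\tau_1^2})$, and Zhang's Theorem~\ref{Thm2} forces $\tau_0^2/\alpha_0^{\delta}=\tau_1^2/\alpha_1^{\delta}$, i.e. $\varrho_0=\varrho_1$, which is \eqref{condition1_iff}. Since two Gaussian measures are either equivalent or orthogonal, this closes the characterization.

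The main obstacle is the necessity direction, and within it the verification that the GC--Mat\'ern spectral difference decays like $\mathcal{O}(z^{-(d+2\delta)})$: this requires both the second-order term of the GC density (part~2 of Theorem~\ref{the3}) and the expansion of \eqref{stein1}, together with the inequality $d+2\delta<d+\delta+2$ (valid for $\delta<2$) that identifies the GC correction as the dominant one, after which the integrability threshold is again $\delta>d/2$. Everything else---the boundedness checks, the threshold computation, and the algebra identifying $\varrho_0=\varrho_1$ with \eqref{condition1_iff}---is mechanical. I would also keep track of the restriction $d=1,2,3$, which enters the proof only through the availability of Zhang's Theorem~\ref{Thm2}.
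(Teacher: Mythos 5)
Your proposal is correct, and the sufficiency half coincides with the paper's argument: the paper likewise invokes Stein's criterion \eqref{spectralfinite2}, uses points 2--3 of Theorem~\ref{the3} to reduce the integrand to $z^{d-1}\bigl(\varrho_1-\varrho_0+\mathcal{O}(z^{-\delta})\bigr)^2$ up to constants, and reads off integrability from $\delta>d/2$ together with $\varrho_0=\varrho_1$, which is algebraically identical to \eqref{condition1_iff}. Where you genuinely diverge is the necessity direction. The paper disposes of it in one line, stating that it ``follows the arguments in the proof of Zhang (2004)'' --- i.e.\ it tacitly redoes Zhang's orthogonality construction (reduce to equal shape parameters via the sufficiency part, then separate distinct variances by a strongly consistent quadratic-form statistic) with GC spectral densities in place of Mat\'ern ones. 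You instead build a bridge: each $P({\cal C}_{\delta,\lambda_i,\gamma_i,\sigma_i^2})$ is shown equivalent to a Mat\'ern measure $P({\cal M}_{\delta/2,\alpha_i,\tau_i^2})$ whose leading spectral coefficient matches $\varrho_i$ (this is exactly the content and the computation of Theorem~\ref{ThmX}, whose proof is independent of Theorem~\ref{C_vs_C}, so there is no circularity), and then transitivity of equivalence plus the full ``only if'' of Zhang's Theorem~\ref{Thm2} forces $\varrho_0=\varrho_1$. Your route buys a necessity proof with no new orthogonality argument at all --- it is a clean corollary of results the paper needs anyway --- at the cost of importing the second-order GC--Mat\'ern expansion earlier; the paper's route stays inside the GC family but leaves the reader to reconstruct Zhang's argument. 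One small correction: the restriction $d=1,2,3$ does not enter \emph{only} through Theorem~\ref{Thm2}; it is also what makes the hypothesis $\delta\in(d/2,2)$ non-vacuous and what gives integrability ($\delta>d/2$ with $\delta<2$) in both Stein computations, including your bridge.
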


\begin{proof}

Let us start with the sufficient part of the assertion. From  Theorem~\ref{the3} point 3,
$k<z^{d+\delta}\widehat{C}_{\delta,\lambda_0,\gamma_0,\sigma_0^2}(z)<K$ as  $z\to\infty$.
In order to prove the sufficient part, we need to find conditions
such that for some positive and finite $c$,

\begin{equation}\label{Chau0}
\int_c^{\infty} z^{d-1}\left( \frac{\widehat{C}_{\delta,\lambda_1,\gamma_1,\sigma_1^2}(z)-\widehat{C}_{\delta,\lambda_0,\gamma_0,\sigma_0^2}(z)}{\widehat{C}_{\delta,\lambda_0,\gamma_0,\sigma_0^2}(z)}\right)^2 dz
<\infty
\end{equation}
We proceed by direct construction, and, using Theorem~\ref{the3} Point 2 we find that as $z\to\infty$,

\begin{equation*}
\begin{aligned}
\Big|\frac{\widehat{C}_{\delta,\lambda_1,\gamma_1,\sigma_1^2}(z)-\widehat{C}_{\delta,\lambda_0,\gamma_0,\sigma_0^2}(z)}{\widehat{C}_{\delta,\lambda_0,\gamma_0,\sigma_0^2}(z)}  \Big|&\leq \frac{z^{d+\delta}}{k}\Big|\varrho_1 z^{-(d+\delta)}-\mathcal{O}(z^{-(d+2\delta)})-\varrho_0 z^{-(d+\delta)}+\mathcal{O}(z^{-(d+2\delta)})\Big|\\
&\leq \frac{1}{k}\Big|\varrho_1  -\varrho_0 +\mathcal{O}(z^{-\delta})\Big|
\end{aligned}
\end{equation*}

where $\varrho_{i}=\frac{2^\delta \sigma_i^2\lambda_i\Gamma(\frac{\delta+d}{2})\Gamma(\frac{\delta+2}{2})\sin(\frac{\pi\delta}{2})}{\delta\gamma_i^{\delta}\pi^{\frac{d}{2}+1}}$, with $i=0,1$.

Then we obtain,
\begin{equation*}
\begin{aligned}
\int_c^{\infty} z^{d-1}\left( \frac{\widehat{C}_{\delta,\lambda_1,\gamma_1,\sigma_1^2}(z)-\widehat{C}_{\delta,\lambda_0,\gamma_0,\sigma_0^2}(z)}{\widehat{C}_{\delta,\lambda_0,\gamma_0,\sigma_0^2}(z)}\right)^2 dz
&\leq\frac{z^{d+\delta}}{k^2}\int_c^{\infty}z^{d-1}\left(\varrho_1  -\varrho_0 +\mathcal{O}(z^{-\delta})\right)^2{\rm d}z\\
\end{aligned}
\end{equation*}

We conclude  that~\eqref{Chau0} is true if  $ \delta>d/2$  and  $\varrho_0=\varrho_1$.
This last condition implies (\ref{condition1_iff}). Moreover since  $\delta <2 $, the condition $ \delta>d/2$ can be  satisfied only for $d=1, 2, 3.$
 The sufficient part of our claim is thus proved.
 The necessary   part follows the arguments in the proof of \cite{Zhang:2004}.

\end{proof}

An immediate consequence of Theorem~\ref{C_vs_C} is that, for a fixed  $\delta \in (d/2,2)$, the parameters  $\lambda$, $\gamma$ and $\sigma^2$  cannot be estimated consistently.
Nevertheless the microergodic parameter $\sigma^{2}\lambda /\gamma^{\delta}$
is consistently estimable. In Section 4, we establish the asymptotic properties of ML estimation associated with the microergodic parameter of the GC model.

The second relevant result of this paper give sufficient conditions for the compatibility of  a GC and a  MT covariance model.

\begin{theo}\label{ThmX}
For  given $\delta \in (d/2,2)$, let $P( {\cal C}_{\delta,\lambda_1,\gamma_1,\sigma_1^2})$  and $P({\cal M}_{\nu,\alpha,\sigma^2_0})$   be  two zero mean Gaussian measures.
If   $\nu=\delta/2$
and  
\begin{equation}\label{122}
\frac{\sigma_{0}^{2}}{\alpha^{2\nu}}=\left( \frac{ \Gamma^2(\delta/2)\sin(\pi\delta/2)}{2^{1-\delta}\pi} \right) \frac{\sigma_{1}^{2}}{\gamma_1^{\delta}}\lambda_1,
\end{equation}
then  for any
bounded infinite set $D\subset \R^d$, $d=1, 2, 3$,
 $P({\cal M}_{\nu,\alpha,\sigma^2_0}) \equiv  P( {\cal C}_{\delta,\lambda_1,\gamma_1,\sigma_1^2})$ on the paths of $Z(\ss), \ss \in D$,

\end{theo}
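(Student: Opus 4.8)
The plan is to apply Stein's sufficient criterion \eqref{spectralfinite2}, taking the Matérn measure as the baseline, i.e.\ setting $\widehat{\rho}_0 = \widehat{{\cal M}}_{\nu,\alpha,\sigma^2_0}$ and $\widehat{\rho}_1 = \widehat{{\cal C}}_{\delta,\lambda_1,\gamma_1,\sigma_1^2}$, and to mirror the structure of the proof of Theorem~\ref{C_vs_C}. First I would verify the tail-boundedness hypothesis on $\widehat{\rho}_0$: from the closed form \eqref{stein1}, expanding $(1+\alpha^2 z^2)^{-(\nu+d/2)}$ for large $z$ gives $\widehat{{\cal M}}_{\nu,\alpha,\sigma^2_0}(z) = C_{\mathcal M}\, z^{-(2\nu+d)}\bigl(1 + O(z^{-2})\bigr)$ with $C_{\mathcal M} = \frac{\Gamma(\nu+d/2)}{\pi^{d/2}\Gamma(\nu)}\frac{\sigma_0^2}{\alpha^{2\nu}}$ a finite positive constant; hence $z^{a}\widehat{\rho}_0(z)$ is bounded away from $0$ and $\infty$ with $a = 2\nu+d$, and $\widehat{\rho}_0 \asymp z^{-(2\nu+d)}$ at infinity.

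The crucial step is the role of the hypothesis $\nu = \delta/2$. With this choice the Matérn tail exponent $2\nu+d$ equals the GC tail exponent $d+\delta$ supplied by Theorem~\ref{the3} point~2, so both spectral densities decay at the same algebraic rate $z^{-(d+\delta)}$ and a cancellation of leading terms becomes possible. I would then compare leading coefficients: the GC coefficient is $\varrho_1 = \frac{2^\delta \sigma_1^2\lambda_1\Gamma(\frac{\delta+d}{2})\Gamma(\frac{\delta+2}{2})\sin(\frac{\pi\delta}{2})}{\delta\gamma_1^\delta\pi^{d/2+1}}$, while the Matérn one at $\nu=\delta/2$ is $C_{\mathcal M} = \frac{\Gamma(\frac{\delta+d}{2})}{\pi^{d/2}\Gamma(\delta/2)}\frac{\sigma_0^2}{\alpha^{\delta}}$. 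Imposing $C_{\mathcal M}=\varrho_1$, cancelling the common factor $\Gamma(\frac{\delta+d}{2})\pi^{-d/2}$, and using the identity $\Gamma(\frac{\delta+2}{2}) = \Gamma(\tfrac{\delta}{2}+1) = \tfrac{\delta}{2}\Gamma(\tfrac{\delta}{2})$, the factor $\delta$ cancels and one is left with exactly condition~\eqref{122} (noting $\alpha^{2\nu}=\alpha^{\delta}$ and $2^{\delta-1}=2^{-(1-\delta)}$). This algebraic reduction is the one bookkeeping step to carry out with care, but it is purely mechanical.

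Once the leading coefficients agree, I would control the remainder. By Theorem~\ref{the3} point~2 the GC density equals $\varrho_1 z^{-(d+\delta)} - \mathcal{O}(z^{-(d+2\delta)})$, while the Matérn expansion contributes a correction of absolute order $z^{-(d+\delta+2)}$; since $\delta<2$ we have $2\delta<\delta+2$, so the GC remainder $\mathcal{O}(z^{-(d+2\delta)})$ dominates and
\[
\widehat{{\cal C}}_{\delta,\lambda_1,\gamma_1,\sigma_1^2}(z) - \widehat{{\cal M}}_{\nu,\alpha,\sigma^2_0}(z) = \mathcal{O}\!\left(z^{-(d+2\delta)}\right),\qquad z\to\infty.
\]
Dividing by $\widehat{\rho}_0 \asymp z^{-(d+\delta)}$ yields a relative difference of order $z^{-\delta}$, so the integrand in \eqref{spectralfinite2} is $\mathcal{O}(z^{d-1-2\delta})$, which is integrable on $[c,\infty)$ exactly when $d-1-2\delta<-1$, i.e.\ $\delta>d/2$. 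This is the standing hypothesis, and together with $\delta<2$ it restricts $d$ to $\{1,2,3\}$, matching the statement. Invoking \eqref{spectralfinite2} then gives $P({\cal M}_{\nu,\alpha,\sigma^2_0})\equiv P({\cal C}_{\delta,\lambda_1,\gamma_1,\sigma_1^2})$, and since only sufficiency is asserted, no converse is needed.

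The main obstacle, and the only genuinely non-routine point, is ensuring that the asymptotic expansions coming from the two different representations (the closed form \eqref{stein1} for the Matérn versus the contour-integral asymptotics of Theorem~\ref{the3} for the GC) are compatible enough to be subtracted at the level of the second-order term; concretely, one must check that the implicit constant in the Matérn $O(z^{-2})$ correction and the explicit $\mathcal{O}(z^{-(d+2\delta)})$ term of Theorem~\ref{the3} combine without a surprise cancellation or lower-order contamination that would degrade the $z^{-(d+2\delta)}$ bound. Everything else reduces to the coefficient identity above and the elementary integrability threshold $\delta>d/2$.
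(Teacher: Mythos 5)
Your proposal is correct and follows essentially the same route as the paper's proof: Stein's sufficient condition \eqref{spectralfinite2} with the Mat{\'e}rn density as baseline, matching of the leading tail coefficients (which, via $\Gamma(\delta/2+1)=\tfrac{\delta}{2}\Gamma(\delta/2)$, reduces exactly to \eqref{122} under $\nu=\delta/2$), and the integrability threshold $z^{d-1}\,\mathcal{O}(z^{-2\delta})$ giving $\delta>d/2$, hence $d=1,2,3$. The paper carries out the same expansion, keeping both the Mat{\'e}rn $\mathcal{O}(z^{-2})$ correction and the GC $\mathcal{O}(z^{-\delta})$ remainder in the relative error before squaring, and arrives at the identical conclusion.
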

\begin{proof}
The spectral density of the  MT model is given by:
\begin{equation} \label{stein1}
\widehat{{\cal M}}_{\nu,\alpha,\sigma_0^2}(z)= \frac{\Gamma(\nu+d/2)}{\pi^{d/2} \Gamma(\nu)}
\frac{\sigma^2 \alpha^d}{(1+\alpha^2z^2)^{\nu+d/2}}
, \qquad z \ge 0.
\end{equation}
It is known that $\widehat{{\cal M}}_{\nu,\alpha,\sigma^2_0}(z)z^a$ is bounded away from 0 and $\infty$ as $z \to \infty$
for some $a>0$ \citep{Zhang:2004}.
In order to prove the sufficient part
we need to find conditions such that  for some positive and finite $c$,
\begin{equation}\label{eq:999}
\int_{c}^{\infty}z^{d-1} \bigg(
\frac{\widehat{C}_{\delta,\lambda_1,\gamma_1,\sigma_1^2}(z)-\widehat{{\cal M}}_{\nu,\alpha,\sigma^2_0}(z)}{\widehat{{\cal M}}_{\nu,\alpha,\sigma^2_0}(z)}
 \biggr)^{2} {\rm d} z<\infty.
\end{equation}

Let $\varrho^{-1}_2=\frac{ \Gamma(\nu+d/2)\sigma_{0}^{2}\alpha^{-2\nu}}{\pi^{d/2}\Gamma(\nu)   }$.
Using asymptotic expansion of ~\eqref{stein1} and  Theorem~\ref{the3}, point 2, we have  that as $z\to\infty$,
\begin{equation*}
\begin{aligned}
\Big|\frac{\widehat{C}_{\delta,\lambda_1,\gamma_1,\sigma_1^2}(z)-\widehat{{\cal M}}_{\sigma^2_0,\alpha,\nu}(z)}{\widehat{{\cal M}}_{\sigma^2_0,\alpha,\nu}(z)}\Big|&= \Big|\varrho^{-1}_2\big[\varrho_1 z^{-(d+\delta)}-\mathcal{O}(z^{-(d+2\delta)})](\alpha^{-2}+z^{2})^{\nu+\frac{d}{2}}-1\Big|\\
&=\Big|\varrho^{-1}_2\big[\varrho_1 z^{-(d+\delta)}-\mathcal{O}(z^{-(d+2\delta)})]z^{2\nu+d}((\alpha z)^{-2}+1)^{\nu+\frac{d}{2}}-1\Big|\\
&=\Big|\varrho^{-1}_2\big[\varrho_1 z^{-(d+\delta)}-\mathcal{O}(z^{-(d+2\delta)})]z^{2\nu+d} \big[1+(\nu+d/2)(\alpha z)^{-2}\\
&\hbox{ }+\mathcal{O}(z^{-2})\big]-1\Big|\\
&=\Big|\varrho^{-1}_2\varrho_1 z^{2\nu-\delta}-1+\varrho^{-1}_2\varrho_1(\nu+d/2)\alpha^{-2}z^{2\nu-\delta-2}+\mathcal{O}(z^{2\nu-\delta-2})\\
&\hbox{ }-\mathcal{O}(z^{2\nu-2\delta})-\mathcal{O}(z^{2\nu-2\delta-2})\Big|\\
&\leq\Big|\varrho^{-1}_2\varrho_1 z^{2\nu-\delta}-1\Big|+\varrho^{-1}_2\varrho_1(\nu+d/2)\alpha^{-2}z^{2\nu-\delta-2}+
\mathcal{O}(z^{2\nu-2\delta})\\
&\hbox{ }+\mathcal{O}(z^{2\nu-2\delta-2})+\mathcal{O}(z^{2\nu-\delta-2}).\\
\end{aligned}
\end{equation*}

Then, if $2\nu=\delta$ and $\varrho^{-1}_2\varrho_1=1$ we obtain,
\begin{equation*}
\begin{aligned}
\int_c^{\infty}z^{d-1}\Big|\frac{\widehat{C}_{\delta,\lambda_1,\gamma_1,\sigma_1^2}(z)-\widehat{{\cal M}}_{\sigma^2_0,\alpha,\nu}(z)}{\widehat{{\cal M}}_{\sigma^2_0,\alpha,\nu}(z)}\Big|^2{\rm d}z
&\leq\int_c^{\infty}z^{d-1}\left((\nu+d/2)\alpha^{-2}z^{-2}+
\mathcal{O}(z^{-\delta})\right)^2{\rm d}z\\
\end{aligned}
\end{equation*}

and the second term of the inequality is finite for  $\delta>d/2$.
Moreover since  $\delta <2 $, the condition $\delta>d/2$ can be  satisfied only for $d=1, 2, 3.$
Then for a given    $\delta\in (d/2,2)$ and 
 $d=1, 2, 3$, inequality  ~\eqref{eq:999} is true if $\nu=\delta/2$ and $\varrho^{-1}_2\varrho_1=1$. This last two conditions
 implies
(\ref{122}).

\end{proof}

{\bf Remark I}:
As expected, compatibility  between GC and MT covariance models is achieved only for a  subset of the parametric space of $\nu$
that leads to  non differentiable sample paths and in particular for  $d/4<\nu<1$, $d=1,2,3$.

The following are sufficient conditions given in  \cite{Bevilacqua_et_al:2016} concerning the compatibility of  a MT and a  GW covariance models.

\begin{theo}\label{matgw}
For given $\nu\geq1/2$ and $\kappa\geq0$,
let $P({\cal M}_{\nu,\alpha,\sigma_0^2})$ and $P( \varphiall[][1])$
  be two zero mean Gaussian measures. If $\nu=\kappa+1/2$,  $\mu> d+\kappa+1/2$, and
\begin{equation}\label{cafu}
\frac{\sigma_{0}^{2}}{\alpha^{2\nu}}=\mu\left( \frac{\Gamma(2\kappa+\mu+1)}{\Gamma(\mu+1)} \right) \frac{\sigma_{1}^{2}}{\beta^{2\kappa+1}},
\end{equation}
then for any
bounded infinite set $D\subset \R^d$, $d=1, 2, 3$, $P({\cal M}_{\nu,\alpha,\sigma^2_0})
  \equiv P(\varphiall[][1])$ on the paths of $Z(\ss), \ss \in D$.
\end{theo}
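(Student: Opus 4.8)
The plan is to verify Stein's spectral criterion \eqref{spectralfinite2} with the state of truth taken to be the Mat\'ern measure, that is $\widehat{\rho}_0=\widehat{{\cal M}}_{\nu,\alpha,\sigma_0^2}$ and $\widehat{\rho}_1=\widehat{\varphi}_{\mu,\kappa,\beta_1,\sigma_1^2}$. From the rational form \eqref{stein1} it is immediate that $\widehat{{\cal M}}_{\nu,\alpha,\sigma_0^2}(z)\,z^{2\nu+d}$ is bounded away from $0$ and $\infty$ as $z\to\infty$, so the boundedness hypothesis of the criterion holds with $a=2\nu+d$ \citep{Zhang:2004}. It then remains to exhibit conditions under which
\[
\int_c^{\infty} z^{d-1}\left(\frac{\widehat{\varphi}_{\mu,\kappa,\beta_1,\sigma_1^2}(z)-\widehat{{\cal M}}_{\nu,\alpha,\sigma_0^2}(z)}{\widehat{{\cal M}}_{\nu,\alpha,\sigma_0^2}(z)}\right)^2 {\rm d} z<\infty
\]
for some finite positive $c$, exactly as in the proofs of Theorems~\ref{C_vs_C} and \ref{ThmX}.

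The two ingredients are the large-$z$ expansions of the two spectral densities. For the Mat\'ern, expanding \eqref{stein1} is elementary and gives
\[
\widehat{{\cal M}}_{\nu,\alpha,\sigma_0^2}(z)=\frac{\Gamma(\nu+d/2)}{\pi^{d/2}\Gamma(\nu)}\,\sigma_0^2\alpha^{-2\nu}\,z^{-(2\nu+d)}\bigl(1-(\nu+d/2)(\alpha z)^{-2}+\mathcal{O}(z^{-4})\bigr).
\]
For the GW model I would start from its spectral density $\widehat{\varphi}_{\mu,\kappa,\beta_1,\sigma_1^2}(z)=\sigma_1^2 L\beta_1^d\,{}_1F_2(\lambda;\lambda+\mu/2,\lambda+\mu/2+1/2;-(z\beta_1)^2/4)$, with $\lambda=(d+1)/2+\kappa$, and insert the large-argument asymptotics of ${}_1F_2$. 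This produces a dominant algebraic term of order $z^{-(2\kappa+1+d)}$, governed by the fractal index $2\kappa+1$ at the origin, together with a subdominant oscillatory contribution produced by the edge of the compact support, where $\varphi_{\mu,\kappa,\beta_1,\sigma_1^2}(r)\sim (1-r/\beta_1)^{\mu+\kappa}$; a van der Corput / stationary-phase estimate of \eqref{FT} then shows this edge contribution to be of order $z^{-(d/2+\mu+\kappa+1/2)}$. Since $\nu=\kappa+1/2$ forces $2\nu+d=2\kappa+1+d$, the leading powers of the two densities coincide.

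With both leading coefficients in hand, the next step is to observe that condition \eqref{cafu} is precisely the statement that they agree: the GW leading coefficient is proportional to the microergodic quantity $\mu\,\sigma_1^2/\beta_1^{2\kappa+1}$ with the same universal factor $\Gamma(\nu+d/2)/(\pi^{d/2}\Gamma(\nu))$ as the Mat\'ern, so that $C_{\cal M}=C_{\text{GW}}$ reproduces \eqref{cafu}. Once the leading terms cancel in the numerator, the ratio is controlled by two competing remainders: the origin-correction terms, of relative order $z^{-2}$, whose square integrates as $\int z^{d-5}\,{\rm d}z$ and hence converges exactly for $d=1,2,3$; and the oscillatory edge term, of relative order $z^{\kappa+1/2+d/2-\mu}$, whose square integrates as $\int z^{2d+2\kappa-2\mu}\,{\rm d}z$ and hence converges exactly when $\mu>d+\kappa+1/2$. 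Combining the two gives the finiteness of the integral and therefore $P({\cal M}_{\nu,\alpha,\sigma_0^2})\equiv P(\varphi_{\mu,\kappa,\beta_1,\sigma_1^2})$.

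The main obstacle is the GW expansion in the second step: unlike the Mat\'ern case, where \eqref{stein1} is rational and the expansion is immediate, here one must extract both the leading algebraic coefficient and the decay rate of the next-order (oscillatory) edge term from the large-argument behavior of ${}_1F_2$, and it is exactly the edge term that dictates the restriction $\mu>d+\kappa+1/2$. Once these asymptotics are established, the remainder follows the template already used for Theorems~\ref{C_vs_C} and \ref{ThmX}, and since the result is recorded as being due to \cite{Bevilacqua_et_al:2016}, the detailed asymptotics may alternatively be imported from there.
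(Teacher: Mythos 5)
This theorem is not proved in the paper at all --- it is imported from \cite{Bevilacqua_et_al:2016} (``The following are sufficient conditions given in...'') --- so there is no internal proof to compare against; your reconstruction is correct and follows precisely the template the paper does use for its own equivalence results (Theorems~\ref{C_vs_C} and~\ref{ThmX}), namely Stein's condition \eqref{spectralfinite2} with the Mat\'ern density as reference, large-$z$ expansions of both spectral densities, matching of the leading coefficients, and square-integrability of the remainders. Your bookkeeping also checks out: matching the GW leading coefficient against the Mat\'ern one (after simplifying the constants $L$ and $c_3$ via the Legendre duplication formula, which is where $\nu=\kappa+1/2$ enters) reproduces \eqref{cafu} exactly; the $\mathcal{O}(z^{-2})$ origin corrections contribute $\int z^{d-5}\,{\rm d}z$, finite exactly for $d=1,2,3$; and the oscillatory edge term of the GW spectral density contributes $\int z^{2d+2\kappa-2\mu}\,{\rm d}z$, finite exactly when $\mu>d+\kappa+1/2$ --- which are precisely the hypotheses of the statement.
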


Putting together  Theorem~\ref{ThmX} and Theorem~\ref{matgw} we obtain the next  new result that establish sufficient conditions
for the compatibility of  a GW and GC covariance function:

\begin{theo}\label{W_C}
For   given  $\delta \in (d/2,2)\cap [1,2)$
let   $P( {\cal C}_{\delta,\lambda,\gamma,\sigma_0^2})$ and $P(\varphi_{\mu,\kappa,\beta,\sigma_1^2})$
 be  two zero mean Gaussian measures.  If  $\kappa+1/2=\delta/2$,  $\mu> d+\kappa+1/2$ and 
\begin{equation}\label{12}
 \left( \frac{\Gamma(2\kappa+\mu+1)}{\Gamma(\mu+1)} \right) \frac{\sigma_{1}^{2}}{\beta^{2\kappa+1}} \mu= \left( \frac{ \Gamma^2(\delta/2)\sin(\pi\delta/2)}{2^{1-\delta}\pi} \right) \frac{\sigma_{0}^{2}}{\gamma^{\delta}}\lambda,
\end{equation}

then for any
bounded infinite set $D\subset \R^d$, $d=1, 2, 3$, $P({\cal C}_{\delta,\lambda,\gamma,\sigma_0^2}) \equiv  P(\varphi_{\mu,\kappa,\beta,\sigma_1^2})$ on the paths of $Z(\ss), \ss \in D$.

\end{theo}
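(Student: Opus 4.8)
The plan is to establish the claimed equivalence by transitivity, using an auxiliary Mat\'ern measure as a bridge between the GC and GW measures. Since equivalence of measures is a transitive relation, it suffices to exhibit a single Mat\'ern measure $P(\mathcal{M}_{\nu,\alpha,\sigma_\ast^2})$ that is simultaneously equivalent to $P(\mathcal{C}_{\delta,\lambda,\gamma,\sigma_0^2})$ and to $P(\varphi_{\mu,\kappa,\beta,\sigma_1^2})$; the conclusion then follows immediately.

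First I would fix the Mat\'ern smoothness at $\nu=\delta/2$. The hypothesis $\kappa+1/2=\delta/2$ then gives $\nu=\kappa+1/2$, so the smoothness constraints of both bridging theorems are simultaneously met. The prescribed range $\delta\in(d/2,2)\cap[1,2)$ is exactly what makes this consistent: $\delta>d/2$ and $\delta<2$ are required by Theorem~\ref{ThmX}, while $\delta\ge 1$ yields $\nu=\delta/2\ge 1/2$ and $\kappa=(\delta-1)/2\ge 0$, the regularity needed in Theorem~\ref{matgw}; the assumption $\mu>d+\kappa+1/2$ is carried over verbatim.

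Next I would apply Theorem~\ref{ThmX}, identifying its GC parameters with $\lambda,\gamma,\sigma_0^2$ and its Mat\'ern variance with $\sigma_\ast^2$: since $\nu=\delta/2$, the measures $P(\mathcal{M}_{\nu,\alpha,\sigma_\ast^2})$ and $P(\mathcal{C}_{\delta,\lambda,\gamma,\sigma_0^2})$ are equivalent provided
\begin{equation*}
\frac{\sigma_\ast^2}{\alpha^{2\nu}}=\left(\frac{\Gamma^2(\delta/2)\sin(\pi\delta/2)}{2^{1-\delta}\pi}\right)\frac{\sigma_0^2}{\gamma^{\delta}}\lambda.
\end{equation*}
Likewise, applying Theorem~\ref{matgw} to the same Mat\'ern measure and the GW measure $P(\varphi_{\mu,\kappa,\beta,\sigma_1^2})$, with $\nu=\kappa+1/2$ and $\mu>d+\kappa+1/2$, yields equivalence of $P(\mathcal{M}_{\nu,\alpha,\sigma_\ast^2})$ and $P(\varphi_{\mu,\kappa,\beta,\sigma_1^2})$ provided
\begin{equation*}
\frac{\sigma_\ast^2}{\alpha^{2\nu}}=\mu\left(\frac{\Gamma(2\kappa+\mu+1)}{\Gamma(\mu+1)}\right)\frac{\sigma_1^2}{\beta^{2\kappa+1}}.
\end{equation*}

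Finally I would choose the bridge explicitly: fix any $\alpha>0$ and define $\sigma_\ast^2$ through either display. A single Mat\'ern measure satisfies both constraints at once precisely when the two right-hand sides agree, and this coincidence is exactly condition~\eqref{12} of the statement. Transitivity of equivalence then gives $P(\mathcal{C}_{\delta,\lambda,\gamma,\sigma_0^2})\equiv P(\varphi_{\mu,\kappa,\beta,\sigma_1^2})$ on the paths of $Z(\ss),\ss\in D$, for $d=1,2,3$. There is no genuine analytic obstacle here, since the spectral estimates underlying both bridges were already carried out in Theorems~\ref{ThmX} and~\ref{matgw}; the only points requiring care are the bookkeeping identification that collapses the two microergodic constraints into~\eqref{12}, and the verification that the prescribed range of $\delta$ keeps both theorems applicable with a common value of $\nu$.
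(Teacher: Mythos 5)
Your proposal is correct and is essentially the paper's own argument: the paper obtains Theorem~\ref{W_C} precisely by ``putting together'' Theorem~\ref{ThmX} and Theorem~\ref{matgw}, i.e., by transitivity of equivalence through a Mat\'ern bridge with $\nu=\delta/2=\kappa+1/2$, with condition~\eqref{12} arising exactly as the coincidence of the two microergodic constraints~\eqref{122} and~\eqref{cafu}. Your write-up just makes explicit the bookkeeping (choice of $\alpha$, $\sigma_\ast^2$, and the role of $\delta\in(d/2,2)\cap[1,2)$) that the paper leaves implicit.
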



{\bf Remark II}:
As expected, compatibility  between GC and GW covariance models is achieved only for a  subset of the parametric space of $\kappa$
that leads to  non differentiable sample paths and in particular $0\leq\kappa<1/2$, $d=1,2$ and $1/4\leq\kappa<1/2$, $d=3$.

\section{Asymptotic properties of the ML estimation for the Generalized Cauchy model}

We now focus on the microergodic parameter  $\sigma^2\lambda/ \gamma^{\delta}$ associated with  the GC family. The following results fix the  asymptotic properties of its ML estimator. In particular, we shall show that the microergodic parameter can be estimated consistently, and then assess the asymptotic distribution of the ML estimator.

Let $D\subset \R^d$ be  a bounded subset of $ \R^d$
and  $S_n=\{ \ss_1,\ldots,\ss_n \in D \subset \R^d \}$
 denote any set of distinct locations.
Let $\bZ_n=(Z(\boldsymbol{s}_1),\ldots,Z(\boldsymbol{s}_n))^{\prime}$
be a finite  realization of  $Z(\boldsymbol{s})$, $\boldsymbol{s}\in D$, a zero mean stationary Gaussian process with  a given parametric covariance function
$\sigma^2 \phi(\cdot; \btau)$, with $\sigma^2>0$, $\btau$ a parameter vector and  $\phi$ a member of the family $\Phi_d$, with $\phi(0; \btau)=1$.

We then write 
$R_{n}(\btau)=[\phi(\|\boldsymbol{s}_i-\boldsymbol{s}_j\|; \btau)]_{i,j=1}^n$ for the associated correlation matrix.
The  Gaussian log-likelihood function is defined as:
\begin{equation}\label{eq:17}
\mathcal{L}_{n}(\sigma^{2},\btau)=-\frac{1}{2} \left(n\log(2\pi\sigma^{2})+\log(|R_{n}(\btau)|)+\frac{1}{\sigma^{2}}\bZ_n^{\prime}R_{n}(\btau)^{-1}\bZ_n \right).
\end{equation}
Under the GC model, the Gaussian log-likelihood is obtained with $\phi(\cdot; \btau)\equiv {\cal C}_{1,\lambda,\delta,\gamma}$
and $\btau=(\lambda,\delta,\gamma)^{\prime}$.
Since in what follows  $\delta$  and $\lambda$ are  assumed known and  fixed, for notation convenience, we write $\tau=\gamma$.
Let $\hat{\sigma}^2_n$ and  $\hat{\gamma}_n$ be the maximum likelihood estimator obtained maximizing
$\mathcal{L}_{n}(\sigma^{2},\gamma)$ for  fixed $\delta$  and $\lambda$.

In order to prove
consistency and asymptotic Gaussianity of the microergodic parameter, we first consider an estimator that maximizes~\eqref{eq:17} with respect to $ \sigma^{2}$ for  a fixed arbitrary scale parameter $\gamma>0$,
obtaining the following estimator
 \begin{equation} \label{hoceini}
 \hat{\sigma}_n^2(\gamma)=\argmax_{\sigma^2} \mathcal{L}_{n}(\sigma^{2},\gamma)=\bZ_n^{\prime}R_{n}(\gamma)^{-1}\bZ_n/n. \end{equation}
  Here $R_{n}(\gamma)$ is the  correlation matrix  coming from the GC family ${\cal C}_{1,\lambda,\delta,\gamma}$.
 The following result offers some asymptotic properties of ML estimator of  the migroergodic parameter
 $\hat{\sigma}_{n}^{2}(\gamma)\lambda/\gamma^{2\delta}$ both in terms of consistency and asymptotic distribution.
 The proof is omitted since it follows the same steps  in  \cite{Bevilacqua_et_al:2016} and \cite{Wang:Loh:2011}.

 \begin{theo}\label{theo10}
 Let $Z(\boldsymbol{s})$, $\boldsymbol{s}\in D$, be a zero mean Gaussian process  with covariance function belonging to the GC family, i.e.
 ${\cal C}_{\sigma_0^2,\lambda,\delta,\gamma_0}$, with 
$\delta\in (d/2,2)$, $d=1,2,3$ and $\lambda>d$.
 Suppose $(\sigma_{0}^{2},\gamma_0)\in (0,\infty)\times  (0,\infty)$. For a fixed $\gamma>0$, let $\hat{\sigma}_{n}^{2}(\gamma)$ as defined through Equation~\eqref{hoceini}. Then,  as $n\to\infty$, 
\begin{enumerate}
\item $\hat{\sigma}_{n}^{2}(\gamma)\lambda/\gamma^{\delta}\stackrel{a.s}{\longrightarrow} \sigma_{0}^{2}\lambda/\gamma_{0}^{\delta}$ and
\item $n^{\frac{1}{2}}(\hat{\sigma}_{n}^{2}(\gamma)\lambda/\gamma^{2\delta}-\sigma_{0}^{2}\lambda/\gamma_{0}^{\delta})\stackrel{\mathcal{D}}{\longrightarrow}\mathcal{N}(0,2(\sigma_{0}^{2}\lambda/\gamma_{0}^{\delta})^{2})$.
\end{enumerate}
\end{theo}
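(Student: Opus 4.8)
The plan is to establish both claims under a conveniently chosen \emph{compatible} measure and then transport them to the true measure using the equivalence of Theorem~\ref{C_vs_C}. Write $P_0=P({\cal C}_{\delta,\lambda,\gamma_0,\sigma_0^2})$ for the law of the true process and, for the fixed fitting scale $\gamma>0$, set $\sigma_*^2=\sigma_0^2(\gamma/\gamma_0)^{\delta}$ and $P_*=P({\cal C}_{\delta,\lambda,\gamma,\sigma_*^2})$. By construction the two microergodic parameters coincide, $\sigma_*^2\lambda/\gamma^{\delta}=\sigma_0^2\lambda/\gamma_0^{\delta}$, so Theorem~\ref{C_vs_C} gives $P_0\equiv P_*$ on the paths of $Z(\ss),\ss\in D$; the hypothesis $\delta\in(d/2,2)$ with $d=1,2,3$ is exactly the admissible range there.

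The first step is to analyze everything under $P_*$, where the computations are exact. Under $P_*$ the matrix $R_n(\gamma)$ is the true correlation matrix of $\bZ_n$, so $\bZ_n\sim\mathcal{N}(0,\sigma_*^2 R_n(\gamma))$. Using the lower-triangular Cholesky factorization $R_n(\gamma)=L_nL_n'$ and the innovations $\bU_n=L_n^{-1}\bZ_n$, the nested ordering of the locations guarantees that $U_i$ depends only on $Z(\ss_1),\dots,Z(\ss_i)$ and is unchanged when a location is appended; hence the $U_i$ form a single i.i.d.\ sequence with $U_i\sim\mathcal{N}(0,\sigma_*^2)$, and $\bZ_n'R_n(\gamma)^{-1}\bZ_n=\bU_n'\bU_n=\sum_{i=1}^n U_i^2$. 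Therefore $\hat\sigma_n^2(\gamma)=n^{-1}\sum_{i=1}^n U_i^2$, the strong law gives $\hat\sigma_n^2(\gamma)\stackrel{a.s.}{\longrightarrow}\sigma_*^2$ under $P_*$, and the classical central limit theorem, with $\var(U_i^2/\sigma_*^2)=2$, gives $n^{1/2}(\hat\sigma_n^2(\gamma)-\sigma_*^2)\stackrel{\mathcal{D}}{\longrightarrow}\mathcal{N}(0,2\sigma_*^4)$ under $P_*$.

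For part~1 the transfer is immediate: the event $\{\lim_n\hat\sigma_n^2(\gamma)=\sigma_*^2\}$ belongs to the $\sigma$-algebra generated by $\{Z(\ss),\ss\in D\}$ and has $P_*$-probability one, so by equivalence it has $P_0$-probability one as well; multiplying by $\lambda/\gamma^{\delta}$ and recalling $\sigma_*^2\lambda/\gamma^{\delta}=\sigma_0^2\lambda/\gamma_0^{\delta}$ yields the consistency statement. The main obstacle is part~2, because weak convergence does \emph{not} transfer automatically across equivalent measures. I would resolve this by a contiguity argument: since $P_0\equiv P_*$, the likelihood ratios $L_n=\d P_0/\d P_*$ restricted to $\mathcal{F}_n=\sigma(Z(\ss_1),\dots,Z(\ss_n))$ form a positive martingale converging $P_*$-almost surely and in $L^1$ to the finite, strictly positive Radon--Nikodym derivative $L_\infty$, so $\log L_n=O_{P_*}(1)$ rather than growing at the $\sqrt n$ rate of the statistic. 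The delicate point is to show that $n^{1/2}(\hat\sigma_n^2(\gamma)-\sigma_*^2)$ and $\log L_n$ are asymptotically independent under $P_*$ (the CLT fluctuation is a uniform average over all innovations, whereas the increments of $\log L_n$ are summable); granting this, for bounded continuous $f$,
\begin{equation*}
E_0\!\left[f\!\left(n^{1/2}(\hat\sigma_n^2(\gamma)-\sigma_*^2)\right)\right]=E_*\!\left[L_n\,f\!\left(n^{1/2}(\hat\sigma_n^2(\gamma)-\sigma_*^2)\right)\right]\longrightarrow E_*[L_\infty]\,E\big[f(\mathcal{N}(0,2\sigma_*^4))\big],
\end{equation*}
and $E_*[L_\infty]=1$ produces the same normal limit under $P_0$.

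An equivalent and perhaps more self-contained route, following \cite{Wang:Loh:2011}, bypasses the contiguity step by working directly under $P_0$: one computes $E_0[\hat\sigma_n^2(\gamma)]=\sigma_0^2n^{-1}\mathrm{tr}(R_n(\gamma)^{-1}R_n(\gamma_0))$ and $\var_0(\hat\sigma_n^2(\gamma))=2\sigma_0^4n^{-2}\mathrm{tr}\big((R_n(\gamma)^{-1}R_n(\gamma_0))^2\big)$, shows $n^{-1}\mathrm{tr}(R_n(\gamma)^{-1}R_n(\gamma_0))\to(\gamma/\gamma_0)^{\delta}$ and $n^{-1}\mathrm{tr}\big((R_n(\gamma)^{-1}R_n(\gamma_0))^2\big)\to(\gamma/\gamma_0)^{2\delta}$, and applies a central limit theorem for quadratic forms in Gaussian vectors. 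Control of these two trace limits is exactly where the spectral asymptotics of Theorem~\ref{the3} (the tail $\widehat{{\cal C}}_{\delta,\lambda,\gamma,\sigma^2}(z)\asymp z^{-(d+\delta)}$ together with the second-order expansion) must be invoked, and the restriction $\delta>d/2$ is what makes the relevant integrals converge. Either way, substituting $\sigma_*^2\lambda/\gamma^{\delta}=\sigma_0^2\lambda/\gamma_0^{\delta}$ and rescaling by $\lambda/\gamma^{\delta}$ turns the limit $\mathcal{N}(0,2\sigma_*^4)$ into $\mathcal{N}\!\big(0,2(\sigma_0^2\lambda/\gamma_0^{\delta})^2\big)$, which is the asserted asymptotic distribution.
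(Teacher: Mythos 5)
Your overall architecture — pass to the compatible measure $P_*=P({\cal C}_{\delta,\lambda,\gamma,\sigma_*^2})$ with $\sigma_*^2=\sigma_0^2(\gamma/\gamma_0)^{\delta}$, exploit the exact i.i.d.\ innovations structure there, then carry the conclusions back through the equivalence of Theorem~\ref{C_vs_C} — is sound, and it is genuinely different from the route the paper intends: the paper omits the proof and defers to \cite{Bevilacqua_et_al:2016} and \cite{Wang:Loh:2011}, whose argument is essentially your ``second route'' (a quadratic-form/trace analysis directly under $P_0$, with the spectral expansions of Theorem~\ref{the3} used to verify Wang--Loh-type conditions). Incidentally, that is the only place the hypothesis $\lambda>d$ enters (it keeps $\widehat{{\cal C}}_{\delta,\lambda,\gamma,\sigma^2}$ bounded at the origin, i.e.\ rules out long memory, so those conditions can be checked); your main argument never uses it, which, if completed, would be a genuine strengthening. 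Your treatment of the innovations $U_i$, the SLLN/CLT under $P_*$, and the transfer of Part~1 by equivalence are all correct.

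The genuine gap is in Part~2. You correctly observe that weak convergence does not transfer across equivalent measures, but you then reduce the problem to an ``asymptotic independence'' of $n^{1/2}(\hat\sigma_n^2(\gamma)-\sigma_*^2)$ and $\log L_n$ and write ``granting this'' — that assumption \emph{is} the theorem at this point, and the parenthetical heuristic (summable increments of $\log L_n$ versus a uniform average) is not a proof. Moreover, independence from $\log L_n$ (an object with moving index $n$) is the wrong thing to aim at; the clean closure conditions on a \emph{fixed} finite past. Concretely: fix $\epsilon>0$ and $m$ with $E_*|L_\infty-L_m|<\epsilon$, where $L_m=E_*[L_\infty\,|\,\mathcal{F}_m]$ and $\mathcal{F}_m=\sigma(Z(\ss_1),\dots,Z(\ss_m))=\sigma(U_1,\dots,U_m)$. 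Split $X_n=n^{1/2}(\hat\sigma_n^2(\gamma)-\sigma_*^2)=A_{n,m}+B_{n,m}$ with $A_{n,m}=n^{-1/2}\sum_{i\le m}(U_i^2-\sigma_*^2)\to 0$ $P_*$-a.s.\ and $B_{n,m}=n^{-1/2}\sum_{i=m+1}^{n}(U_i^2-\sigma_*^2)$, which is \emph{exactly} independent of $\mathcal{F}_m$ under $P_*$ and still converges to $\mathcal{N}(0,2\sigma_*^4)$. Then for bounded Lipschitz $f$, $E_0[f(X_n)]=E_*[L_n f(X_n)]=E_*[L_\infty f(X_n)]+o(1)$ by $L^1$-martingale convergence, $|E_*[L_\infty f(X_n)]-E_*[L_m f(X_n)]|\le \norm{f}_\infty\epsilon$, $E_*[L_m|f(X_n)-f(B_{n,m})|]\to 0$ by dominated convergence, and $E_*[L_m f(B_{n,m})]=E_*[L_m]\,E_*[f(B_{n,m})]\to E[f(\mathcal{N}(0,2\sigma_*^4))]$ by independence; letting $\epsilon\downarrow 0$ finishes the transfer using equivalence alone. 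Finally, your alternative route as stated is also incomplete: since the theorem centers at the constant $\sigma_0^2\lambda/\gamma_0^{\delta}$, the limits $n^{-1}\mathrm{tr}(R_n(\gamma)^{-1}R_n(\gamma_0))\to(\gamma/\gamma_0)^{\delta}$ are not enough — you need the rate $\mathrm{tr}\bigl(\sigma_0^2R_n(\gamma)^{-1}R_n(\gamma_0)\bigr)-n\sigma_*^2=o(n^{1/2})$ to kill the bias term, and delivering that rate is exactly what the spectral conditions (with $\delta>d/2$ and $\lambda>d$) are for in \cite{Wang:Loh:2011} and \cite{Bevilacqua_et_al:2016}.
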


The second type of estimation considers the  joint maximization of~\eqref{eq:17}  with respect
to
$(\sigma^{2},\gamma)\in (0,\infty)\times I$ where $I=[\gamma_L, \gamma_U]$ and $0<\gamma_L<\gamma_U<\infty$.
The solution of this optimization problem is  given by
$(\hat{\sigma}_n^2(\hat{\gamma}_n),\hat{\gamma}_n)$ where
 $$\hat{\sigma}_n^2(\hat{\gamma}_n)= \bZ_n^{\prime}R_{n}(\hat{\gamma}_n)^{-1}\bZ_n/n$$
and $\hat{\gamma}_n=\argmax_{\gamma \in I} \mathcal{PL}_{n}(\gamma)$.  Here $\mathcal{PL}_{n}(\gamma)$
is the profile log-likelihood:
\begin{equation}\label{eq:prof}
\mathcal{PL}_{n}(\gamma)=-\frac{1}{2} \left( \log(2\pi)+n\log(\hat{\sigma}_n^2(\gamma))+\log|R_{n}(\gamma)| +n \right).\end{equation}
We now establish the asymptotic properties of the  sequence of random variables  $\hat{\sigma}_{n}^{2}(\hat{\gamma}_{n})\lambda/\hat{\gamma}_{n}^{\delta}$ in a special case.
The following  Lemma is needed in order to establish consistency and asymptotic distribution.





\begin{lemma}\label{lemmaML}
 For any $ \gamma_1< \gamma_2$, $\gamma_i \in I=[\gamma_L, \gamma_U]$, $i=1,2$ and  $\delta\in(0,1]$ and $\lambda >d$ then
 $\hat{\sigma}_{n}^{2}(\gamma_1)/\gamma_1^{ \delta} \leq \hat{\sigma}_{n}^{2}(\gamma_2)/\gamma_2^{ \delta}$ for each $n$.
\end{lemma}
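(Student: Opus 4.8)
The plan is to prove the realization-free (and hence stronger) statement that $\gamma\mapsto\hat{\sigma}_n^2(\gamma)/\gamma^\delta$ is monotone, noting that $\hat{\sigma}_n^2(\gamma)/\gamma^\delta=\tfrac1n\bZ_n^{\prime}(\gamma^\delta R_n(\gamma))^{-1}\bZ_n$ is a quadratic form in the data. Writing $A_n(\gamma):=\gamma^\delta R_n(\gamma)$, the asserted bound $\hat{\sigma}_n^2(\gamma_1)/\gamma_1^\delta\le\hat{\sigma}_n^2(\gamma_2)/\gamma_2^\delta$ is required to hold for every $\bZ_n\in\R^n$, so it is equivalent to the Loewner relation $A_n(\gamma_1)^{-1}\preceq A_n(\gamma_2)^{-1}$; inverting (which reverses order on the cone of positive definite matrices), this is in turn equivalent to $A_n(\gamma_2)\preceq A_n(\gamma_1)$ whenever $\gamma_1<\gamma_2$. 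First I would therefore recast the lemma as the claim that $\gamma\mapsto A_n(\gamma)$ is non-increasing in the Loewner order, for every finite configuration $S_n$ and every $d=1,2,3$.

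The entries of $A_n(\gamma)$ are $g_\gamma(r_{ij})$ with $g_\gamma(r)=\gamma^\delta(1+(r/\gamma)^\delta)^{-\lambda/\delta}$ and $r_{ij}=\|\ss_i-\ss_j\|$, so I would reduce the Loewner monotonicity to a statement about a single radial function: it suffices to show that $A_n'(\gamma)=[\partial_\gamma g_\gamma(r_{ij})]\preceq 0$ for every configuration and every $\gamma\in I$, and then integrate, since $A_n(\gamma_2)-A_n(\gamma_1)=\int_{\gamma_1}^{\gamma_2}A_n'(\gamma)\,\d\gamma$. By Schoenberg's characterization of $\Phi_d$ \citep{Shoe38}, negative semidefiniteness of $[\partial_\gamma g_\gamma(r_{ij})]$ across all configurations is precisely the requirement that the radial function $r\mapsto-\partial_\gamma g_\gamma(r)$ belong to $\Phi_d$ for $d=1,2,3$. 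A direct differentiation gives the closed form $\partial_\gamma g_\gamma(r)=\delta\gamma^{\lambda+\delta-1}(\gamma^\delta+r^\delta)^{-\lambda/\delta-1}[\gamma^\delta+(1+\lambda/\delta)r^\delta]$, so the whole problem collapses to deciding the membership, in $\Phi_d$, of this explicit function.

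To settle that membership I would pass to the scale-mixture representation of the GC correlation, writing $\nu=\lambda/\delta$ and $(1+(r/\gamma)^\delta)^{-\nu}=\tfrac1{\Gamma(\nu)}\int_0^\infty t^{\nu-1}e^{-t}e^{-t(r/\gamma)^\delta}\,\d t$, together with the classical fact that $r\mapsto e^{-c r^\delta}\in\Phi_\infty$ for every $c>0$ when $\delta\in(0,2]$ (the symmetric $\delta$-stable / completely monotone generator). Differentiating under the integral sign gives $\partial_\gamma g_\gamma(r)=\tfrac{\delta\gamma^{\delta-1}}{\Gamma(\nu)}\int_0^\infty t^{\nu-1}e^{-t}[1+t(r/\gamma)^\delta]e^{-t(r/\gamma)^\delta}\,\d t$, exhibiting it as a scale mixture of the elementary functions $r\mapsto[1+t(r/\gamma)^\delta]e^{-t(r/\gamma)^\delta}$, each of which I would analyze through Schoenberg's theorem and the isotropic transform \eqref{FT}. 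Here the hypotheses $\delta\in(0,1]$ and $\lambda>d$ would enter: the former to control the subordinator $r\mapsto r^\delta$, and the latter to guarantee the integrability that makes the isotropic spectral density of $\partial_\gamma g_\gamma$ well defined and of a single sign. Once the correct sign in $\Phi_d$ is certified, integrating $A_n'(\gamma)$ over $[\gamma_1,\gamma_2]$ yields the Loewner inequality of the first paragraph and hence the stated bound.

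The main obstacle is exactly the penultimate step. The derivative $\partial_\gamma g_\gamma$ decomposes as a genuinely positive definite GC-type term, proportional to $(\gamma^\delta+r^\delta)^{-\lambda/\delta}$, plus a term proportional to $r^\delta(\gamma^\delta+r^\delta)^{-\lambda/\delta-1}$ that vanishes at the origin and is strictly positive elsewhere; the two pieces pull membership in $\Phi_d$ in opposite directions, so the sign cannot be read off term by term and a genuine spectral computation is unavoidable. Pinning this down — equivalently, verifying the sign of the isotropic spectral density of $\partial_\gamma g_\gamma$ on $[0,\infty)$ for $d=1,2,3$ under $\delta\in(0,1]$ and $\lambda>d$ — is the crux on which both the direction of monotonicity and the stated inequality rest, and it is where I expect the real work, and the use of the dimension restriction $d\le 3$, to concentrate.
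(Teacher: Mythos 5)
Your opening reduction is correct, and it is in fact the same first move as the paper's proof: since the inequality must hold for every realization, it is equivalent to a Loewner ordering of the matrices $A_n(\gamma)=\gamma^{\delta}R_n(\gamma)$, and your closed form for $\partial_\gamma g_\gamma$ is right. The fatal problem is the direction you then commit to. You need $A_n(\gamma_2)\preceq A_n(\gamma_1)$ for $\gamma_1<\gamma_2$, equivalently $-\partial_\gamma g_\gamma\in\Phi_d$; but this is provably false, not merely delicate. The diagonal entries of $A_n(\gamma)$ are $g_\gamma(0)=\gamma^{\delta}$, strictly increasing in $\gamma$, so $A_n(\gamma_1)-A_n(\gamma_2)$ has strictly negative diagonal and can never be positive semidefinite; equivalently, $-\partial_\gamma g_\gamma(0)=-\delta\gamma^{\delta-1}<0$, and no function that is negative at the origin can lie in $\Phi_d$. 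The case $n=1$ already falsifies the displayed inequality: $\hat{\sigma}_1^2(\gamma)/\gamma^{\delta}=Z_1^2/\gamma^{\delta}$ is strictly decreasing in $\gamma$. So the ``crux'' you defer to a final spectral computation is not where hidden work lies --- it is already decided, against the inequality as printed, for every choice of $\delta$, $\lambda$ and $d$.

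What is true, and what the paper's own proof establishes, is the reverse ordering: for $\delta\in(0,1]$ and $\lambda>d$ the difference $\gamma_2^{\delta}{\cal C}_{\delta,\lambda,\gamma_2,1}-\gamma_1^{\delta}{\cal C}_{\delta,\lambda,\gamma_1,1}$ is itself a member of $\Phi_d$ (the paper does no spectral work here; it cites Theorem 3.3 of \cite{Tar_Mor2018}), whence $A_n(\gamma_1)\preceq A_n(\gamma_2)$ and $\hat{\sigma}_n^2(\gamma_1)/\gamma_1^{\delta}\geq\hat{\sigma}_n^2(\gamma_2)/\gamma_2^{\delta}$. In other words, the inequality sign in the lemma statement is a typo: the paper's proof derives the opposite direction, and the only use of the lemma --- sandwiching ${\cal G}_n(\hat{\gamma}_n)$ between ${\cal G}_n(\gamma_L)$ and ${\cal G}_n(\gamma_U)$ in Theorem \ref{theo11} --- needs only monotonicity, in whichever direction it holds. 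Ironically, once the sign is flipped your Gamma-mixture identity finishes the proof cleanly and termwise, with no conflict between the two pieces: each mixand $\bigl[1+t(r/\gamma)^{\delta}\bigr]e^{-t(r/\gamma)^{\delta}}$ is the Mat{\'e}rn correlation of smoothness $3/2$ evaluated at $r^{\delta}$, i.e.\ a completely monotone profile composed with the Bernstein function $r\mapsto r^{\delta}$, $\delta\in(0,1]$, hence a member of $\Phi_{\infty}$ by Schoenberg's criterion; integrating the resulting positive definite $\partial_\gamma g_\gamma$ over $[\gamma_1,\gamma_2]$ gives $A_n(\gamma_1)\preceq A_n(\gamma_2)$ in every dimension, which is even stronger than the cited external theorem. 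As written, however, your proposal both aims at an impossible sign and leaves the decisive positive-definiteness step unresolved, so it does not constitute a proof.
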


\begin{proof}
The proof follows  \cite{Shaby:Kaufmann:2013} and \cite{Bevilacqua_et_al:2016}  which use the same arguments in the MT and GW cases.
Let  $0<\gamma_1<\gamma_2$, with $ \gamma_1, \gamma_2 \in I$. Then, for any $\bZ_n$,
$$\hat{\sigma}_{n}^{2}(\gamma_1)/\gamma_1^{\delta}- \hat{\sigma}_{n}^{2}(\gamma_2)/\gamma_2^{ \delta}=\frac{1}{n}
\bZ_n^{\prime}(R_{n}(\gamma_1)^{-1}\gamma_1^{-\delta} -R_n(\gamma_2)^{-1}\gamma_2^{-\delta})\bZ_n $$
is nonnegative if the matrix  $R_{n}(\gamma_1)^{-1}\gamma_1^{-\delta} -R_n(\gamma_2)^{-1}\gamma_2^{-\delta}$  is positive semi-definite and this happens if and only if  the matrix
  $B=R_{n}(\gamma_2)\gamma_2^{ \delta} -R_n(\gamma_1)\gamma_1^{ \delta}$
  with generic element
    $$B_{ij}=\gamma_2^{ \delta}{\cal C}_{\delta,\lambda,\gamma_2,1} (\norm{\ss_i-\ss_j}) -  \gamma_1^{ \delta} {\cal C}_{\delta,\lambda,\gamma_1,1}(\norm{\ss_i-\ss_j}). $$
  is  positive semi-definite.
From Theorem {\it 3.3} of \cite{Tar_Mor2018}, this happens  if $\delta\in(0,1]$ and $\lambda>d$.
\end{proof}

We now establish strong consistency and asymptotic distribution
 of the sequence of random variables  $\hat{\sigma}_{n}^{2}(\hat{\gamma}_{n})\lambda/\hat{\gamma}_{n}^{\delta}$.
\begin{theo}\label{theo11}
Let $Z(\boldsymbol{s})$, $\boldsymbol{s}\in D\subset \R^d$,  be a zero mean Gaussian process with a Cauchy covariance model
${\cal C}_{\sigma_0^2,\lambda,\delta,\gamma_0}$ with
  $d=1$ and $\delta\in (1/2,1]$, $\lambda>1$ 
or $d=2$ and $\delta=1$, $\lambda>2$
 Suppose $(\sigma_{0}^{2},\gamma_{0})\in (0,\infty)\times
I$ where  $I=[\gamma_L, \gamma_U]$ with $0<\gamma_L<\gamma_U<\infty$. Let
$(\hat{\sigma}_{n}^{2},\hat{\gamma}_{n})^{\prime}$ maximize~\eqref{eq:17} over $(0,\infty)\times I$. Then as
$n\to\infty$,
\begin{enumerate}
\item $\hat{\sigma}_{n}^{2}(\hat{\gamma}_{n})\lambda/\hat{\gamma}_{n}^{\delta}   \stackrel{a.s}{\longrightarrow}\sigma_0^{2}(\gamma_0)\lambda/\gamma_0^{\delta}$ and
\item $\sqrt{n}(\hat{\sigma}_{n}^{2}(\hat{\gamma}_{n})\lambda/\hat{\gamma}_{n}^{\delta}  -\sigma_0^{2}(\gamma_0)\lambda/\gamma_0^{\delta})\stackrel{\mathcal{D}}{\longrightarrow} \mathcal{N}(0,2(\sigma_0^{2}(\gamma_0)\lambda/\gamma_0^{\delta})^{2})$.
\end{enumerate}
\end{theo}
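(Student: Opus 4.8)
The plan is to follow the sandwiching strategy of \cite{Shaby:Kaufmann:2013}, combining the monotonicity established in Lemma~\ref{lemmaML} with the fixed-$\gamma$ asymptotics of Theorem~\ref{theo10}. Write $\theta_0=\sigma_0^2\lambda/\gamma_0^{\delta}$ for the target microergodic parameter and $T_n(\gamma)=\hat{\sigma}_n^2(\gamma)\lambda/\gamma^{\delta}$. The hypotheses of the theorem are exactly what is needed to invoke both ingredients: $\delta\le 1$ and $\lambda>d$ make Lemma~\ref{lemmaML} applicable, while $\delta>d/2$ is required by Theorem~\ref{theo10}. Intersecting $(d/2,2)$ with $(0,1]$ leaves $\delta\in(1/2,1]$ for $d=1$ and forces $\delta=1$ for $d=2$, which explains the two cases appearing in the statement.

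For the strong consistency in Part~1, I would first note that $\hat{\gamma}_n\in I=[\gamma_L,\gamma_U]$ by construction. Applying Lemma~\ref{lemmaML} at the two endpoints gives the deterministic sandwich
\begin{equation*}
T_n(\gamma_L)\le T_n(\hat{\gamma}_n)\le T_n(\gamma_U),
\end{equation*}
valid for every $n$ and every realization. By Theorem~\ref{theo10}, Part~1, applied at the fixed values $\gamma_L$ and $\gamma_U$, both $T_n(\gamma_L)$ and $T_n(\gamma_U)$ converge almost surely to the common limit $\theta_0$. The squeeze theorem then yields $T_n(\hat{\gamma}_n)\to\theta_0$ almost surely, which is assertion~1. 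For Part~2 I would rescale the same sandwich, obtaining
\begin{equation*}
\sqrt{n}\bigl(T_n(\gamma_L)-\theta_0\bigr)\le \sqrt{n}\bigl(T_n(\hat{\gamma}_n)-\theta_0\bigr)\le \sqrt{n}\bigl(T_n(\gamma_U)-\theta_0\bigr).
\end{equation*}
By Theorem~\ref{theo10}, Part~2, each bounding sequence converges in distribution to $\mathcal{N}(0,2\theta_0^2)$. A distributional sandwich does not by itself transfer to the middle term, so the crucial step is to show that the endpoints are \emph{asymptotically equivalent}, i.e. $\sqrt{n}\bigl(T_n(\gamma_U)-T_n(\gamma_L)\bigr)\xrightarrow{p}0$; once this holds, both bounds share a common limit law and the middle term is forced to the same $\mathcal{N}(0,2\theta_0^2)$, giving assertion~2.

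The main obstacle is precisely this asymptotic-equivalence step. Since $T_n(\gamma_U)-T_n(\gamma_L)$ is a difference of quadratic forms $\bZ_n'A\bZ_n$ with $\bZ_n\sim\mathcal{N}(0,\Sigma)$, $\Sigma=\sigma_0^2R_n(\gamma_0)$, I would control it in $L^2$. The scaled mean $\sqrt{n}\,\mathbb{E}[T_n(\gamma_U)-T_n(\gamma_L)]$ vanishes, and, using the identity $\mathrm{Cov}(\bZ_n'A\bZ_n,\bZ_n'B\bZ_n)=2\,\mathrm{tr}(A\Sigma B\Sigma)$, the scaled variance reduces to showing
\begin{equation*}
n\,\mathrm{Cov}\bigl(T_n(\gamma_U),T_n(\gamma_L)\bigr)\longrightarrow 2\theta_0^2,
\end{equation*}
so that, together with $n\,\mathrm{Var}(T_n(\gamma_L))\to2\theta_0^2$ and $n\,\mathrm{Var}(T_n(\gamma_U))\to2\theta_0^2$, the variance of $\sqrt{n}\bigl(T_n(\gamma_U)-T_n(\gamma_L)\bigr)$ tends to $0$.

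This covariance limit rests on the trace asymptotics for $\mathrm{tr}\bigl(R_n(\gamma_U)^{-1}R_n(\gamma_0)R_n(\gamma_L)^{-1}R_n(\gamma_0)\bigr)$, which are governed by the equivalence of the three Gaussian measures $P(\mathcal{C}_{\delta,\lambda,\gamma,\sigma^2})$ sharing the microergodic value $\theta_0$ (Theorem~\ref{C_vs_C}) and by the GC spectral tail $\widehat{\mathcal C}_{\delta,\lambda,\gamma,\sigma^2}(z)\asymp z^{-(d+\delta)}$ of Theorem~\ref{the3}. This is the technically heavy part of the argument. I would not reproduce it in full but adapt the estimates along the same route used for the MT model in \cite{Shaby:Kaufmann:2013} and for the GW model in \cite{Bevilacqua_et_al:2016}, substituting the GC spectral tail for theirs, exactly as in Theorem~\ref{theo10}, whose proof is cited as following those references.
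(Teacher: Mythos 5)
Your overall strategy and your Part~1 coincide exactly with the paper's proof: the paper sets ${\cal G}_n(x)=\hat{\sigma}_{n}^{2}(x)/x^{\delta}$, notes that $\gamma_L\le\hat{\gamma}_n\le\gamma_U$ by construction, invokes Lemma~\ref{lemmaML} to get ${\cal G}_n(\gamma_L)\le{\cal G}_n(\hat{\gamma}_{n})\le{\cal G}_n(\gamma_U)$ for all $n$ with probability one, and concludes both assertions by combining this sandwich with Theorem~\ref{theo10}. Your reading of why the hypotheses take the stated form (intersecting the constraint $\delta\in(0,1]$, $\lambda>d$ of Lemma~\ref{lemmaML} with the constraint $\delta\in(d/2,2)$, $\lambda>d$ of Theorem~\ref{theo10}) is also the right one.

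The problem is your Part~2. The claim that ``a distributional sandwich does not by itself transfer to the middle term'' is false in this setting, and it sends you on an unnecessary and, as written, unfinished detour. The ordering supplied by Lemma~\ref{lemmaML} is a pointwise (almost sure) ordering of random variables, not merely an ordering of laws. Writing, in your notation, $A_n=\sqrt{n}\bigl(T_n(\gamma_L)-\theta_0\bigr)$, $B_n=\sqrt{n}\bigl(T_n(\hat{\gamma}_n)-\theta_0\bigr)$ and $C_n=\sqrt{n}\bigl(T_n(\gamma_U)-\theta_0\bigr)$, the event inclusions $\{C_n\le x\}\subseteq\{B_n\le x\}\subseteq\{A_n\le x\}$ give $P(C_n\le x)\le P(B_n\le x)\le P(A_n\le x)$ for every $x$ and every $n$; since both outer distribution functions converge, by Theorem~\ref{theo10} Part~2, to the distribution function of $\mathcal{N}(0,2\theta_0^2)$, which is continuous everywhere, the middle one is squeezed to the same limit. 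This cumulative-distribution-function squeeze is precisely how the paper (following Shaby and Kaufman) finishes, with no asymptotic-equivalence step. Moreover, the step you substitute for it is not actually carried out: you assert that $\sqrt{n}\,\mathbb{E}\bigl[T_n(\gamma_U)-T_n(\gamma_L)\bigr]$ vanishes and that $n\,\mathrm{Cov}\bigl(T_n(\gamma_U),T_n(\gamma_L)\bigr)\to2\theta_0^2$, but neither is trivial --- both involve exactly the delicate trace asymptotics for quantities such as $\mathrm{tr}\bigl(R_n(\gamma)^{-1}R_n(\gamma_0)\bigr)$ and $\mathrm{tr}\bigl(R_n(\gamma_U)^{-1}R_n(\gamma_0)R_n(\gamma_L)^{-1}R_n(\gamma_0)\bigr)$ that constitute the hard content of results like Theorem~\ref{theo10}, and you defer them entirely to the literature. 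So as written your proof of Part~2 has a gap; the repair is simply to delete the detour and apply the squeeze argument above.
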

\begin{proof}
The proof follows  \cite{Shaby:Kaufmann:2013} and \cite{Bevilacqua_et_al:2016}  which use the same arguments in the MT and GW cases.
Let ${\cal G}_n(x)=\hat{\sigma}_{n}^{2}(x)/x^{ \delta}  $ and
 define  the sequences ${\cal G}_n(\gamma_L)$      and ${\cal G}_n(\gamma_U)$.
Since $\gamma_{L}\leq \hat{\gamma}_{n} \leq \gamma_{U}$  for every $n$, then, using Lemma \ref{lemmaML},
${\cal G}_n(\gamma_L) \leq {\cal G}_n(\hat{\gamma}_{n})    \leq {\cal G}_n(\gamma_U)$
for all $n$ with probability
one. Combining this with Theorem \ref{theo10}  implies the result.

\end{proof}




\section{Prediction using Generalized Cauchy model}

We now consider  prediction of a Gaussian process at a new location $\ss_0$,
using GC model, under fixed domain asymptotic.
Specifically, we focus on
two properties: asymptotic efficiency  prediction and asymptotically correct
estimation of prediction variance.
\cite{Stein:1988} shows that both asymptotic
properties hold when the  Gaussian measures are equivalent.
Let $P({\cal C}_{\sigma_i^2,\lambda_i,\delta,\gamma_i})$, $i=0,1$ be two zero mean Gaussian measures. Under
$P({\cal C}_{\sigma_0^2,\lambda_0,\delta,\gamma_0})$, and using Theorem \ref{C_vs_C},  both properties hold when
$\sigma_0^{2}\lambda_0\gamma_0^{-\delta}=\sigma_1^{2}\lambda_1\gamma_1^{-\delta}$, $\delta \in (d/2,2)$ and $d=1,2,3$.

Similarly, let $P( {\cal M}_{\nu,\alpha,\sigma_2^2})$  and $P({\cal C}_{\sigma^2,\lambda,\delta,\gamma})$
be two Gaussian measures with MT and  Cauchy model.
Using Theorem \ref{ThmX},
under
 $P({\cal M}_{\nu,\alpha,\sigma_2^2})$      both properties hold when (\ref{122}) is true,
$\delta \in (d/2,2)$, $d=1,2,3$.
In addition, let $P(\varphi_{\mu,\kappa,\beta,\sigma_3^2})$ a Gaussian measure with GW model.
Using Theorem \ref{W_C},
under
$P(\varphi_{\mu,\kappa,\beta,\sigma_3^2})$     both properties hold when (\ref{12}) is true,  $\mu>d+\kappa+1/2$, $\delta \in (d/2,2)\cap [1,2)$ and   $d=1,2,3$.

 Actually, \cite{Stein:1993} gives a substantially weaker condition, based on the ratio of spectral densities, for asymptotic efficiency  prediction based on
the asymptotic behaviour of the ratio of the isotropic spectral densities. Now, let
\begin{equation}\label{blup}
\widehat{Z}_{n}(\delta,\lambda,\gamma)=\c_n(\delta,\lambda,\gamma)^{\prime}R_{n}(\delta,\lambda,\gamma)^{-1}\bZ_n
\end{equation}
be the best linear unbiased predictor at an unknown location $\ss_0\in D\subset \R^d$,
under the misspecified model ${\cal C}_{\delta,\lambda,\gamma,\sigma^2}$,
where $\c_n(\delta,\lambda,\gamma)=[{\cal C}_{\delta,\lambda,\gamma,1}(\|\boldsymbol{s}_0-\boldsymbol{s}_i)\|]_{i=1}^n$
 and $R_{n}(\delta,\lambda,\gamma)=[{\cal C}_{\delta,\lambda,\gamma,1}(\|\boldsymbol{s}_i-\boldsymbol{s}_j)\|]_{i,j=1}^n$
 is the correlation matrix.

If the correct model is $P( {\cal C}_{\delta,\lambda_0,\gamma_0,\sigma^2_0})$, then the mean squared error of the predictor is given by:
\begin{align}\label{mse_miss}
&\var_{\delta,\lambda_0,\gamma_0,\sigma^2_0}\left[\widehat{Z}_{n}(\delta,\lambda,\gamma)-Z(\boldsymbol{s}_0)\right]=\sigma_0^2\Big(1-2\c_n(\delta,\lambda,\gamma,)^{\prime}R_{n}(\delta,\lambda,\gamma)^{-1}\c_n(\delta,\lambda_0,\gamma_0)\\ &\quad+ \c_n(\delta,\lambda,\gamma)^{\prime}R_{n}(\delta,\lambda,\gamma)^{-1} R_{n}(\delta,\lambda_0,\gamma_0) R_{n}(\delta,\lambda,\gamma)^{-1}\c_n(\delta,\lambda,\gamma)\Big)\nonumber.
\end{align}
If
$\gamma_0= \gamma$ and $\lambda_0=\lambda$, i.e., true and wrong models coincide, this expression simplifies to
\begin{align}\label{msetrue}
\var_{\delta,\lambda_0,\gamma_0,\sigma^2_0}\big[&\widehat{Z}_{n}(\delta,\lambda_0,\gamma_0)-Z(\boldsymbol{s}_0)\big]\\
 \nonumber   &=\sigma_0^2\big(1-\c_n(\delta,\lambda_0,\gamma_0)^{\prime}R_{n}(\delta,\lambda_0,\gamma_0)^{-1}\c_n(\delta,\lambda_0,\gamma_0)\big).
\end{align}
Similarly  $\var_{\nu,\alpha,\sigma^2_2}\big[\widehat{Z}_{n}(\delta,\lambda,\gamma)-Z(\boldsymbol{s}_0)\big]$, $\var_{\nu,\alpha,\sigma^2_2}\big[\widehat{Z}_{n}(\nu,\alpha)-Z(\boldsymbol{s}_0)\big]$ and $\var_{\mu,\kappa,\beta,\sigma^2_3}\big[\widehat{Z}_{n}(\delta,\lambda,\gamma)-Z(\boldsymbol{s}_0)\big]$, $\var_{\mu,\kappa,\beta,\sigma^2_3}\big[\widehat{Z}_{n}(\mu,\kappa,\beta)-Z(\boldsymbol{s}_0)\big]$ can be defined  under, respectively, $P({\cal M}_{\nu,\alpha,\sigma_2^2})$ and $P(\varphi_{\mu,\kappa,\beta,\sigma^2_3})$, where $\widehat{Z}_{n}(\nu,\alpha)$ and $\widehat{Z}_{n}(\mu,\kappa,\beta)$ are the best linear unbiased predictor using respectively the MT and GW models.
The following results are an application of Theorems~1 and~2 of \cite{Stein:1993}.

\begin{theo}\label{kauf_3}
Let  $P({\cal C}_{\delta,\lambda_0,\gamma_0,\sigma_0^2})$, $P({\cal C}_{\delta,\lambda_1,\gamma_1,\sigma_1^2})$, $P(\varphi_{\mu,\kappa,\beta,\sigma^2_3})$,
$P( {\cal M}_{\nu,\alpha,\sigma^2_2})$ be four Gaussian probability measures on $D\subset \R^d$ with
$\delta \in (d/2,2)$ and $d=1,2,3$. Then, for all $\boldsymbol{s}_0\in D$:
\begin{enumerate}
  \item Under $P({\cal C}_{\delta,\lambda_0,\gamma_0,\sigma_0^2})$, as $n\to \infty$,
  \begin{equation}\label{kauf3_1} \frac{\var_{\delta,\lambda_0,\gamma_0,\sigma^2_0}\bigl[\widehat{Z}_{n}(\delta,\lambda_1,\gamma_1)-Z(\boldsymbol{s}_0)\bigr]}{\var_{\delta,\lambda_0,\gamma_0,\sigma^2_0}\bigl[\widehat{Z}_{n}(\delta,\lambda_0,\gamma_0)-Z(\boldsymbol{s}_0)\bigr]}{\,\longrightarrow\,}1,
        \end{equation}
        for any fixed $\gamma_1>0$ and
          if $\sigma_0^{2}\lambda_0\gamma_0^{-\delta}=\sigma_1^{2}\lambda_1\gamma_1^{-\delta}$, then as $n\to \infty$,
 \begin{equation}\label{kauf3_2} \frac{\var_{\delta,\lambda_1,\gamma_1,\sigma^2_1}\bigl[\widehat{Z}_{n}(\delta,\lambda_1,\gamma_1)-Z(\boldsymbol{s}_0)\bigr]}{\var_{\delta,\lambda_0,\gamma_0,\sigma^2_0}\bigl[\widehat{Z}_{n}(\delta,\lambda_1,\gamma_1)-Z(\boldsymbol{s}_0)\bigr]}{\,\longrightarrow\,}1.
 \end{equation}

   \item Under $P( {\cal M}_{\nu,\alpha,\sigma^2_2})$,   if $\nu=\frac{\delta}{2}$  as $n\to \infty$,
  \begin{equation}\label{kauf3_11}
  \frac{\var_{\nu,\alpha,\sigma^2_2}\bigl[\widehat{Z}_{n}(\delta,\lambda_1,\gamma_1)-Z(\boldsymbol{s}_0)\bigr]}{\var_{\nu,\alpha,\sigma^2_2}\bigl[\widehat{Z}_{n}(\nu,\alpha)-Z(\boldsymbol{s}_0)\bigr]}{\,\longrightarrow\,}1,
        \end{equation}
           for any fixed $\gamma_1>0$ and  
            if 
$\left( \pi^{-1}2^{\delta-1}\Gamma^2(\delta/2)\sin(\pi\delta/2) \right) \sigma_{1}^{2}\lambda\gamma_1^{-\delta}=\sigma_{2}^{2}\alpha^{-2\nu}$,
then as $n\to \infty$
 \begin{equation}\label{kauf3_3}\frac{\var_{\delta,\lambda_1,\gamma_1,\sigma^2_1}\bigl[\widehat{Z}_{n}(\delta,\lambda_1,\gamma_1)-Z(\boldsymbol{s}_0)\bigr]}{\var_{\nu,\alpha,\sigma^2_2}\bigl[\widehat{Z}_{n}(\delta,\lambda_1,\gamma_1)-Z(\boldsymbol{s}_0)\bigr]}{\,\longrightarrow\,}1.
 \end{equation}

   \item Under $P(\varphi_{\mu,\kappa,\beta,\sigma^2_3})$,   if $\kappa+1/2=\delta/2$, $\mu> d+\kappa+1/2$ and
   $\delta \in (d/2,2)\cap [1,2)$  as $n\to \infty$,
  \begin{equation}\label{kaka}
   U_1(\beta)= \frac{\var_{\mu,\kappa,\beta,\sigma^2_3}\bigl[\widehat{Z}_{n}(\delta,\lambda_1,\gamma_1)-Z(\boldsymbol{s}_0)\bigr]}{\var_{\mu,\kappa,\beta,\sigma^2_3}\bigl[\widehat{Z}_{n}(\mu,\kappa,\beta)-Z(\boldsymbol{s}_0)\bigr]}{\,\longrightarrow\,}1,
        \end{equation}
           for any fixed $\gamma_1>0$  and
           if
$\left( \Gamma(2\kappa+\mu+1)\Gamma^{-1}(\mu+1) \right) \sigma_{3}^{2}\mu\beta^{-(2\kappa+1)}=\left(  \pi^{-1} 2^{\delta-1}\Gamma^2(\delta/2)\sin(\pi\delta/2) \right) \sigma_{1}^{2}\lambda\gamma_1^{-\delta}$, then as $n\to \infty$
 \begin{equation}\label{kaka2}U_2= \frac{\var_{\delta,\lambda_1,\gamma_1,\sigma^2_1}\bigl[\widehat{Z}_{n}(\delta,\lambda_1,\gamma_1)-Z(\boldsymbol{s}_0)\bigr]}{\var_{\mu,\kappa,\beta,\sigma^2_3}\bigl[\widehat{Z}_{n}(\delta,\lambda_1,\gamma_1)-Z(\boldsymbol{s}_0)\bigr]}{\,\longrightarrow\,}1.
 \end{equation}

\end{enumerate}
\end{theo}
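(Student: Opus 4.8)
The plan is to treat all six limits as direct applications of Theorems~1 and~2 of \cite{Stein:1993}. I would first recall the two conditions in the form we need: writing $\widehat{\rho}_0$ for the spectral density of the \emph{true} measure and $\widehat{\rho}_1$ for that of the measure used to construct the predictor, Stein's Theorem~1 yields asymptotic efficiency of the best linear unbiased predictor, i.e. the convergence to $1$ in~\eqref{kauf3_1}, \eqref{kauf3_11}, \eqref{kaka}, as soon as $\widehat{\rho}_1(z)/\widehat{\rho}_0(z)$ tends to a finite positive limit $c$ as $z\to\infty$; Stein's Theorem~2 then yields asymptotically correct estimation of the mean squared error, i.e. the convergence to $1$ in~\eqref{kauf3_2}, \eqref{kauf3_3}, \eqref{kaka2}, precisely when $c=1$. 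The whole argument thus reduces to reading off the leading-order tails of the relevant isotropic spectral densities and evaluating $c$; I would also note that the mild regularity hypothesis of \cite{Stein:1993} (that $\widehat{\rho}_0(z)z^{a}$ be bounded away from $0$ and $\infty$ for some $a>0$) is immediate from the power-law tails obtained below.

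For part~1 I would invoke Theorem~\ref{the3} point~2 to write $\widehat{{\cal C}}_{\delta,\lambda_i,\gamma_i,\sigma_i^2}(z)=\varrho_i z^{-(d+\delta)}+\mathcal{O}(z^{-(d+2\delta)})$ with $\varrho_i$ proportional to $\sigma_i^2\lambda_i\gamma_i^{-\delta}$. The ratio then converges to $\varrho_1/\varrho_0\in(0,\infty)$ for \emph{every} choice of parameters, which gives~\eqref{kauf3_1} for any fixed $\gamma_1>0$; and since $\varrho_1/\varrho_0=(\sigma_1^2\lambda_1\gamma_1^{-\delta})/(\sigma_0^2\lambda_0\gamma_0^{-\delta})$, the limit equals $1$ exactly under the stated condition $\sigma_0^2\lambda_0\gamma_0^{-\delta}=\sigma_1^2\lambda_1\gamma_1^{-\delta}$, which by Stein's Theorem~2 gives~\eqref{kauf3_2}.

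For parts~2 and~3 the only new ingredient is the identification of the competing tails. From~\eqref{stein1} I would extract $\widehat{{\cal M}}_{\nu,\alpha,\sigma_2^2}(z)\sim \pi^{-d/2}\Gamma(\nu+d/2)\Gamma(\nu)^{-1}\sigma_2^2\alpha^{-2\nu}z^{-(2\nu+d)}$, so that the ratio with the GC tail $z^{-(d+\delta)}$ has a finite positive limit iff the exponents agree, i.e. $2\nu=\delta$ (the stated $\nu=\delta/2$), giving~\eqref{kauf3_11}; equating the leading constants then forces $c=1$ precisely under the displayed compatibility condition (equivalently~\eqref{122}), which gives~\eqref{kauf3_3}. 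For the GW case I would use that $\widehat{\varphi}_{\mu,\kappa,\beta,\sigma_3^2}(z)\asymp z^{-(d+2\kappa+1)}$, obtained from the large-argument behaviour of the $\mathstrut_1 F_2$ function (leading exponent $2\lambda=d+1+2\kappa$); matching with the GC tail forces $2\kappa+1=\delta$, i.e. $\kappa+1/2=\delta/2$, giving~\eqref{kaka}, and equating the leading constants forces $c=1$ precisely under~\eqref{12}, giving~\eqref{kaka2}. As a cross-check, the two correct-mean-squared-error statements could instead be deduced from the equivalence of the Gaussian measures established in Theorems~\ref{ThmX} and~\ref{W_C} together with \cite{Stein:1988}.

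The main obstacle I anticipate is the GW computation: unlike the GC and MT tails, which are read directly from Theorem~\ref{the3} and~\eqref{stein1}, the leading tail \emph{constant} of $\widehat{\varphi}_{\mu,\kappa,\beta,\sigma_3^2}$ must be extracted from the asymptotics of $\mathstrut_1 F_2\big(\lambda;\lambda+\tfrac{\mu}{2},\lambda+\tfrac{\mu}{2}+\tfrac12;-(z\beta)^2/4\big)$, and it is precisely this constant that must be matched to $\varrho_1$ in order to turn $c=1$ into the explicit condition~\eqref{12}. Once that constant is in hand, each of the six limits follows from the two displayed applications of \cite{Stein:1993}.
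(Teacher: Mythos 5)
Your proposal is correct and follows essentially the same route as the paper's own proof: both read off the power-law tails of the GC, MT and GW isotropic spectral densities (Theorem~\ref{the3}, Equation~\eqref{stein1}, and the $\mathstrut_1 F_2$ asymptotics from \cite{Bevilacqua_et_al:2016}), compute the limit of the spectral-density ratio, and invoke Theorems~1 and~2 of \cite{Stein:1993} for efficiency and for asymptotically correct mean squared error, respectively. The constant-matching you identify as the key step in the GW case is exactly what the paper does to turn the limit-equals-one condition into the explicit condition~\eqref{kkk}.
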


\begin{proof}
Since $\widehat{{\cal C}}_{\sigma^2,\lambda,\delta,\gamma}(z)$ is bounded away from zero and infinity  and  as $z\to \infty$,
\begin{equation*}\label{lim2}
\begin{aligned}
&\frac{\widehat{{\cal C}}_{\delta,\lambda_1,\gamma_1,\sigma_1^2}(z)}{\widehat{{\cal C}}_{\delta,\lambda_0,\gamma_0,\sigma_0^2}(z)}
&=\frac{\varrho_1 z^{-(d+\delta)}- \mathcal{O}(z^{-(d+2\delta)}) }
{\varrho_0 z^{-(d+\delta)}- \mathcal{O}(z^{-(d+2\delta)}) }
\end{aligned}
\end{equation*}
 where $\varrho_i$, $i=0,1$ are defined in the Proof of Theorem \ref{C_vs_C},
 then, if $\delta \in (d/2,2)$ and $d=1,2,3$
 \begin{equation}\label{lim2}
 \underset{z \to \infty}{\lim}\frac{\widehat{{\cal C}}_{\delta,\lambda_1,\gamma_1,\sigma_1^2}(z)}{\widehat{{\cal C}}_{\delta,\lambda_0,\gamma_0,\sigma_0^2}(z)}=\frac{\varrho_1}{\varrho_0}
 =\frac{\sigma_1^{2}\lambda_1\gamma_1^{-\delta}}{\sigma_0^{2}\lambda_0\gamma_0^{-\delta}},
\end{equation}

and using Theorem~1 of \cite{Stein:1993},  we obtain~\eqref{kauf3_1}.
If $\sigma_1^{2}\lambda_0\gamma_1^{-\delta}=\sigma_0^{2}\lambda_1\gamma_0^{-\delta}$, using Theorem~2 of \cite{Stein:1993},  we obtain~\eqref{kauf3_2}.

Similarly, since  $\widehat{{\cal M}}_{\nu,\alpha,\sigma^2_2}(z)$ is bounded away from zero and infinity
and  as $z\to \infty$
\begin{equation}\label{lim}
\begin{aligned}
\frac{\widehat{{\cal C}}_{\delta,\lambda_1,\gamma_1,\sigma_1^2}(z)}{\widehat{{\cal M}}_{\nu,\alpha,\sigma^2_2}(z)}
&=\varrho^{-1}_2\big[\varrho_1 z^{-(d+\delta)}-\mathcal{O}(z^{-(d+2\delta)})](\alpha^{-2}+z^{2})^{\nu+\frac{d}{2}}\\
&=\varrho^{-1}_2\varrho_1 z^{2\nu-\delta}+\varrho^{-1}_2\varrho_1(\nu+d/2)\alpha^{-2}z^{2\nu-\delta-2}+\mathcal{O}(z^{2\nu-\delta-2})
\hbox{ }-\mathcal{O}(z^{2\nu-2\delta})-\mathcal{O}(z^{2\nu-2\delta-2})
\end{aligned}
\end{equation}
 where $\varrho^{-1}_2$ is defined in the Proof of Theorem \ref{ThmX},
then  if $2\nu=\delta$,
$\delta \in (d/2,2)$ and $d=1,2,3$
$$\underset{z\to\infty}{\lim}\frac{\widehat{{\cal C}}_{\delta,\lambda_1,\gamma_1,\sigma_1^2}(z)}{\widehat{{\cal M}}_{\nu,\alpha,\sigma^2_2}(z)}=\varrho^{-1}_2\varrho_1=\frac{  \Gamma^2(\delta/2)\sin(\pi\delta/2) \sigma_{1}^{2}\lambda_1\gamma_1^{-\delta}}{2^{1-\delta}\pi \sigma_{2}^{2}\alpha^{-2\nu}}.$$
Using Theorem~1  of \cite{Stein:1993}, we obtain~\eqref{kauf3_11}.
If
\begin{equation}\label{ppp1}
\left( 2^{\delta-1}\Gamma^2(\delta/2)\sin(\pi\delta/2)\pi^{-1}  \right) \sigma_{1}^{2}\lambda_1\gamma_1^{-\delta}=\sigma_{2}^{2}\alpha^{-2\nu},
\end{equation}
using Theorem~2 of \cite{Stein:1993},  we obtain~\eqref{kauf3_3}.

Similarly, since $\widehat{{\cal \varphi}}_{\mu,\kappa,\beta,\sigma^2_3}(z)$
 is bounded away from zero and infinity,
 if $2\kappa+1=\delta$,  $\mu>d+\delta/2$,
   $\delta \in (d/2,2)\cap [1,2)$,  $d=1,2,3$ and using the asymptotic results on the spectral density of the GW model in \cite{Bevilacqua_et_al:2016},  we have:

\begin{equation*}
\begin{aligned}
  \underset{z\to\infty}{\lim}\frac{\widehat{{\cal C}}_{\delta,\lambda_1,\gamma_1,\sigma_1^2}(z)}{\widehat{{\cal \varphi}}_{\mu,\kappa,\beta,\sigma^2_3}(z)}
&=\underset{z\to\infty}{\lim}\frac{\varrho z^{-(d+\delta)}- \mathcal{O}(z^{-(d+2\delta)})}{\sigma^2L\beta^{d}\Big[c_{3}(z\beta)^{-(d+1)-2\kappa}\big\{1+\mathcal{O}(z^{-2})\big\}+\,c_{4}(z\beta)^{-(\mu+\frac{d+1}{2}+\kappa)}\big\{\cos(z\beta-c_{5})+\mathcal{O}(z^{-1})\big\}\Big]}\\
&=\frac{\varrho}{\sigma^2L\beta^{-(2\kappa+1)}c_{3}}=\frac{  2^{\delta-1}\Gamma^2(\delta/2)\sin(\pi\delta/2)\pi^{-1}  \sigma_{1}^{2}\lambda_1\gamma_1^{-\delta}}{ \Gamma(2\kappa+\mu+1)\Gamma^{-1}(\mu)  \sigma_{3}^{2}\beta^{-(2\kappa+1)}}
\end{aligned}
\end{equation*}
with $\varrho$ defined in Theorem 1, $c_{3}=\Gamma(\mu+2\lambda)\Gamma^{-1}(\mu)$ and $c_4,c_5$ positive constants.
Then, using Theorem~1  of \cite{Stein:1993}, we obtain~\eqref{kaka}.
If \begin{equation}\label{kkk}
\left( \frac{\Gamma(2\kappa+\mu+1)}{\Gamma(\mu+1)} \right) \sigma_{3}^{2}\mu\beta^{-(2\kappa+1)}=\left(  \frac{\Gamma^2(\delta/2)\sin(\pi\delta/2)}{2^{1-\delta}\pi} \right) \sigma_{1}^{2}\lambda_1\gamma_1^{-\delta}
\end{equation}
and using Theorem~2 of \cite{Stein:1993},  we obtain (\ref{kaka2}).

\end{proof}

The implication of Point~1 is that under $P( {\cal C}_{\delta,\lambda_0,\gamma_0,\sigma_0^2})$,
prediction with $P({\cal C}_{\delta,\lambda_1,\gamma_1,\sigma_1^2})$
with an arbitrary $\gamma_1>0$
gives asymptotic prediction efficiency, if
$\delta \in (d/2,2)$,  $d=1,2,3$.
Moreover, if $\sigma_0^{2}\gamma_0^{-\delta}=\sigma_1^{2}\gamma_1^{-\delta}$
then  asymptotic prediction efficiency and asymptotically correct
estimates of error variance are achieved.
 By virtue of
Point~2, under $P({\cal M}_{\nu,\alpha,\sigma^2_2})$, prediction with
${\cal C}_{\delta,\lambda_1,\gamma_1,\sigma_0^2}$, with an arbitrary $\gamma_1>0$, gives
asymptotic prediction efficiency, if $\nu=\delta/2$, $\delta \in (d/2,2)$,  $d=1,2,3$.
Moreover if (\ref{ppp1}) is true
then  asymptotic
prediction efficiency and asymptotically correct estimates of error
variance are achieved.
Finally, Point 3  implies that
under $P({\cal \varphi}_{\mu,\kappa,\beta,\sigma^2_3})$, prediction with
$P({\cal C}_{\delta,\lambda_1,\gamma_1,\sigma_0^2})$, with an arbitrary $\gamma_1>0$, gives
asymptotic prediction efficiency, if $\kappa+1/2=\delta/2$,  $\delta \in (d/2,2)\cap [1,2)$,  $d=1,2,3$.
Moreover, if (\ref{kkk}) is true, then
asymptotic
prediction efficiency and asymptotically correct estimates of error
variance are achieved.

Theorem  \ref{kauf_3} is still valid interchanging the role  of the correct model with the wrong model.
For instance point 3 can be rewritten as follows.

\begin{theo}\label{kauf_4}
Let   $P({\cal C}_{\delta,\lambda_1,\gamma_1,\sigma_1^2})$,
$P(\varphi_{\mu,\kappa,\beta,\sigma^2_3})$ be two  Gaussian probability measures on $D\subset \R^d$,  $d=1,2,3$. Then, for all $\boldsymbol{s}_0\in D$ and
 under $P({\cal C}_{\delta,\lambda_1,\gamma_1,\sigma_1^2})$,   if $\kappa=\frac{\delta}{2}-\frac{1}{2}$, $\mu> d+\kappa+1/2$ and
   $\delta \in (d/2,2)\cap [1,2)$  as $n\to \infty$,
  \begin{equation}\label{kauf3_1111}
 U(\beta)=  \frac{\var_{\delta,\lambda_1,\gamma_1,\sigma_1^2}\bigl[\widehat{Z}_{n}(\mu,\kappa,\beta )-Z(\boldsymbol{s}_0)\bigr]}{\var_{\delta,\lambda_1,\gamma_1,\sigma_1^2}\bigl[\widehat{Z}_{n}(\delta,\lambda_1,\gamma)-Z(\boldsymbol{s}_0)\bigr]}{\,\longrightarrow\,}1,
        \end{equation}
           for any fixed $\beta>0$  and 
            if (\ref{kkk}) is true, then as $n\to \infty$
 \begin{equation}\label{kauf3_333} U_2=\frac{\var_{\mu,\kappa,\beta,\sigma^2_3}\bigl[\widehat{Z}_{n}(\mu,\kappa,\beta )-Z(\boldsymbol{s}_0)\bigr]}{\var_{\delta,\lambda_1,\gamma_1,\sigma_1^2}\bigl[\widehat{Z}_{n}(\mu,\kappa,\beta )-Z(\boldsymbol{s}_0)\bigr]}{\,\longrightarrow\,}1.
 \end{equation}
\end{theo}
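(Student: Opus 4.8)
The plan is to recognize that Theorem~\ref{kauf_4} is nothing more than Point~3 of Theorem~\ref{kauf_3} read with the roles of the true and the misspecified covariance exchanged, and to exploit the symmetry of Stein's spectral criterion. The results of \cite{Stein:1993} require only that the ratio of the two isotropic spectral densities tend to a finite, strictly positive constant as $z\to\infty$; interchanging the two measures simply inverts this ratio, so the convergence (and its finiteness and positivity) is preserved and only the value of the limiting constant is reciprocated. I would therefore reuse the spectral computations already carried out in the proof of Theorem~\ref{kauf_3}, Point~3, rather than redo them.

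First I would fix the tail orders. By Theorem~\ref{the3}, point~3, $\widehat{{\cal C}}_{\delta,\lambda_1,\gamma_1,\sigma_1^2}(z)\asymp z^{-(d+\delta)}$, which is exactly the boundedness hypothesis Stein imposes on the spectral density of the governing (here GC) measure. Using the asymptotic expansion of $\widehat{\varphi}_{\mu,\kappa,\beta,\sigma_3^2}$ from \cite{Bevilacqua_et_al:2016}, whose leading polynomial term decays like $(z\beta)^{-(d+1)-2\kappa}$ and whose oscillatory term decays like $(z\beta)^{-(\mu+(d+1)/2+\kappa)}$, the constraint $2\kappa+1=\delta$ turns the polynomial exponent into $d+\delta$, so $\widehat{\varphi}_{\mu,\kappa,\beta,\sigma_3^2}(z)\asymp z^{-(d+\delta)}$ as well. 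Consequently the relevant limit is the reciprocal of the one established there,
\[
\lim_{z\to\infty}\frac{\widehat{\varphi}_{\mu,\kappa,\beta,\sigma_3^2}(z)}{\widehat{{\cal C}}_{\delta,\lambda_1,\gamma_1,\sigma_1^2}(z)}
=\frac{\Gamma(2\kappa+\mu+1)\Gamma^{-1}(\mu)\,\sigma_3^2\,\beta^{-(2\kappa+1)}}{2^{\delta-1}\pi^{-1}\Gamma^2(\delta/2)\sin(\pi\delta/2)\,\sigma_1^2\lambda_1\gamma_1^{-\delta}},
\]
a finite and strictly positive number for every fixed $\beta>0$.

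With this limit available, Theorem~1 of \cite{Stein:1993} gives at once the asymptotic prediction efficiency~\eqref{kauf3_1111} of the GW predictor $\widehat{Z}_{n}(\mu,\kappa,\beta)$ under the GC truth, for any fixed $\beta>0$. For the second assertion I would impose condition~\eqref{kkk}, which is precisely the equality making the displayed limit equal to $1$; Theorem~2 of \cite{Stein:1993} then yields the asymptotically correct estimation of the error variance~\eqref{kauf3_333}. The admissibility restriction $\delta\in(d/2,2)\cap[1,2)$ enters, as in the earlier theorems, to ensure simultaneously that $d=1,2,3$ and that $\kappa=(\delta-1)/2\ge 0$ defines a genuine GW correlation.

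The argument is essentially mechanical once the symmetry is noted, so I do not anticipate a serious obstacle. The one step requiring care is verifying that the oscillatory component of the GW spectral density is asymptotically negligible relative to its polynomial component: under $2\kappa+1=\delta$ and $\mu>d+\kappa+1/2$ the oscillatory exponent $\mu+(d+1)/2+\kappa$ exceeds $\tfrac{3d}{2}+\delta>d+\delta$, so the polynomial term dominates and the ratio above is a well-defined positive limit. Were this domination to fail, Stein's criterion could not be invoked, so I would present this exponent comparison explicitly.
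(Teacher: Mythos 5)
Your proposal is correct and follows essentially the same route as the paper, which proves Theorem~\ref{kauf_4} simply by observing that Theorem~\ref{kauf_3} remains valid when the roles of the true and misspecified models are interchanged, since Stein's conditions apply to the (now inverted) ratio of spectral densities with the same limiting computations. In fact your write-up is more explicit than the paper's: the verification that under $2\kappa+1=\delta$ and $\mu>d+\kappa+1/2$ the oscillatory exponent $\mu+(d+1)/2+\kappa$ exceeds $d+\delta$ (so the polynomial term of the GW spectral density dominates), and the observation that $\mu/\Gamma(\mu+1)=1/\Gamma(\mu)$ makes condition~(\ref{kkk}) exactly the equality forcing the inverted limit to be $1$, are both left implicit in the paper.
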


 One remarkable
implication of  Theorem \ref{kauf_4} is that  when the true covariance belongs to the GC family,
asymptotic efficiency prediction and asymptotically correct estimation of
mean square error can be achieved, under suitable conditions, using a   compactly supported GW covariance model.

\section{Simulations and illustrations}

The main goals of this section are twofold: on the one hand, we compare
the finite sample behavior of the ML estimation of the microergodic
parameter of the GC model with the asymptotic distributions given in
Theorems~\ref{theo10} and~\ref{theo11}.
On the other hand, we  compare the finite sample behavior of MSE
prediction of a zero mean Gaussian process with GC covariance
model, using   a compatible  GW covariance model (Theorem~\ref{kauf_4}).

\smallskip

For the first goal
we have
considered  $4000$  points uniformly distributed over $[0,1]$
and then we randomnly select a sequence of
$n=250, 500, 1000$ points.
For each $n$ we simulate using Cholesky decomposition
and then we estimate with ML, $500$ realizations from a zero mean
Gaussian process with GC model. For the GC covariance model, ${\cal C}_{\delta,\lambda,\gamma_0,\sigma_0^2}$ we fix $\sigma_0^2=1$
and   in view of Theorem~\ref{theo11},
we fix   $\delta=0.75$ and $\lambda=1.5$.
Then we fix  $\gamma_0$ such that the practical range of the GC models is $0.3$, $0.6$ and $0.9$.
For a given correlation, with practical range $x$,  we mean that  the correlation is approximatively lower than $0.05$ when $r > x$.


   For each simulation, we consider
$\delta$ and $\lambda$  as known and fixed, and we estimate with ML
the variance and scale parameters, obtaining
$\hat{\sigma}^2_{i}$ and $\hat{\gamma}_i$, $i=1,\ldots,1000$.
In order  to estimate, we first  maximize the profile log-likelihood~\eqref{eq:prof} to
get $\hat{\gamma}_i$. Then, we obtain $\hat{\sigma}_i^2(\hat{\gamma}_i)=
\bz_i^{\prime}R(\hat{\gamma}_i)^{-1}\bz_i/n$, where $\bz_i$ is the data
vector of simulation $i$.
Optimization was carried out using the R \citep{R:2005} function \emph{optimize}
 where the
parametric space  was restricted to the interval $
[\varepsilon, 10\gamma_0]$ and $\varepsilon$ is slightly larger than
machine precision, about $10^{-15}$ here.

Using the asymptotic distributions stated in Theorems \ref{theo10} and \ref{theo11}, Table~\ref{tab21} compares
the sample quantiles of order $0.05, 0.25, 0.5, 0.75, 0.95$,
mean and variance of
$ \sqrt{n/2}\big(\widehat{\sigma}_i^2(x)\gamma_0^{\delta}/(\sigma_0^2x^{\delta} )-1   \big)$
for
$x=\widehat{\gamma}_i ,\gamma_0, 0.75\gamma_0, 1.25\gamma_0$ with the
associated theoretical values of the standard Gaussian distribution,
for  $n=500, 1000, 2000$.

%

As expected, the best approximation is achieved overall when using the
true scale parameter, i.e., $x=\gamma_0$.
In the case of
$x=\widehat{\gamma}_i$,  the sample
distribution converge to the  the asymptotic distribution given in
Theorem~\ref{theo11}   when increasing $n$, even if the convergence seems to be  slow.
Note that, for a fixed $n$, when increasing the practical range the convergence  to the standard Gaussian distribution is faster.
 In particular, for
$n=2000$ and practical range equal to $0.9$
the asymptotic distribution given in
Theorem~\ref{theo11} is a satisfactory approximation of the sample
distribution.
When using scale parameters  that are too small or too large with respect to
the true compact support ($x= 0.75\gamma_0, 1.25\gamma_0$), the convergence
to the asymptotic distribution given in Theorem~\ref{theo10} is very
slow.   These results are consistent with  \cite{Shaby:Kaufmann:2013}  and
 \cite{Bevilacqua_et_al:2016} and  when generating
 confidence intervals for the microergodic parameter
 we
 strongly recommend
jointly estimating variance and compact support and using the asymptotic
distribution given in Theorem~\ref{theo11}.


As for the second goal, using the results given in
Theorem~\ref{kauf_4}, we now  compare
asymptotic
prediction efficiency
 and asymptotically correct
estimation of prediction variance using ratios
$U(\beta)$ and $U_2$ defined in
~\eqref{kauf3_1111}
and  ~\eqref{kauf3_333}  respectively.
Specifically, we consider as true model  ${\cal C}_{\delta,\lambda_1,\gamma_1,\sigma_1^2}$    setting
$\sigma_1^2=1$, $\delta=1.2, 1.8$, $\lambda_1= 5$  and $\gamma_1$
such that the practical range is $0.3, 0.6, 0.9$.
As wrong model,
following the conditions  in
Theorem~\ref{kauf_4},
 we consider  $\varphi_{\mu,\kappa,\beta,\sigma^2_3}$
with $\sigma_3^2=1$, $\kappa=(\delta-1)/2$, $\mu=2+\kappa$
and  the ``equivalent'' compact support is obtained as:
\begin{equation*}
   \beta_1^*=\left[\gamma_1^{-\delta}\frac{\sigma_1^2\lambda_1 }{\sigma_3^2\mu}
   \frac{2^{\delta-1}sin(\pi \delta/2)\Gamma^2(\delta/2)\Gamma(\mu+1)}{\Gamma(2\kappa+\mu+1)\pi}
   \right]^{-1/(2\kappa+1)}.
\end{equation*}
For instance if $\delta=1.2$ and $\gamma_1$ is such that the practical range is equal to $0.3$
then $\beta_1^*=0.204$.
Figure  \ref{Fig:tre},  top left part, compares  the GW and GC  covariance model in this case.
The right part compares the GW and GC covariance model under the same setting but with $\delta=1.8$.
In Figure \ref{Fig:tre}, bottom part, we also show two realizations from
a Gaussian random process with
 the two compatible covariance functions shown in top left part.
 The two simulation are performed using cholesky decomposition and they share
 the same Gaussian simulation. It is apparent that the two realizations look very similar.

\begin{figure}
\begin{center}
\begin{tabular}{cc}
\includegraphics[width=0.4\textwidth]
{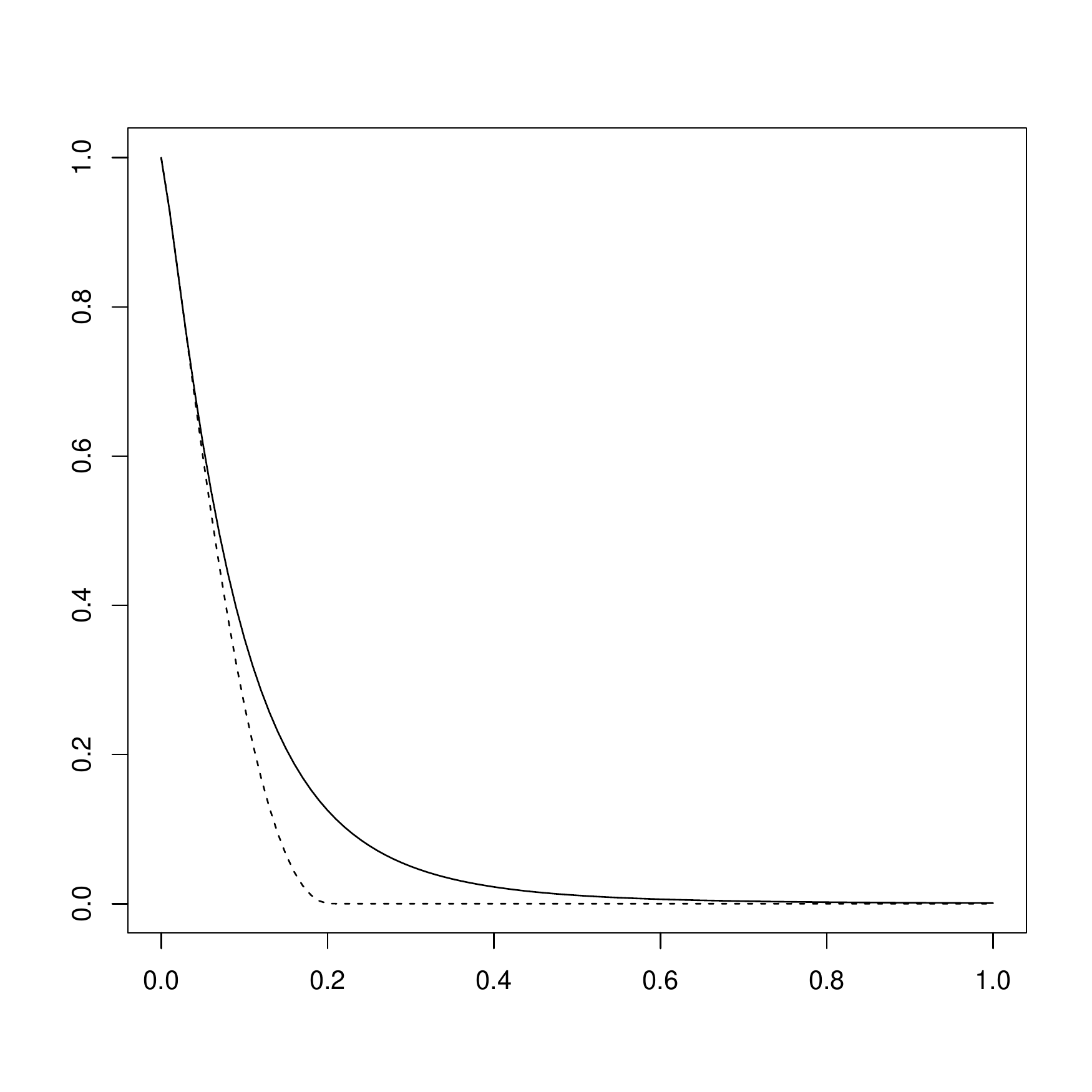}
&
\includegraphics[width=0.4\textwidth]
{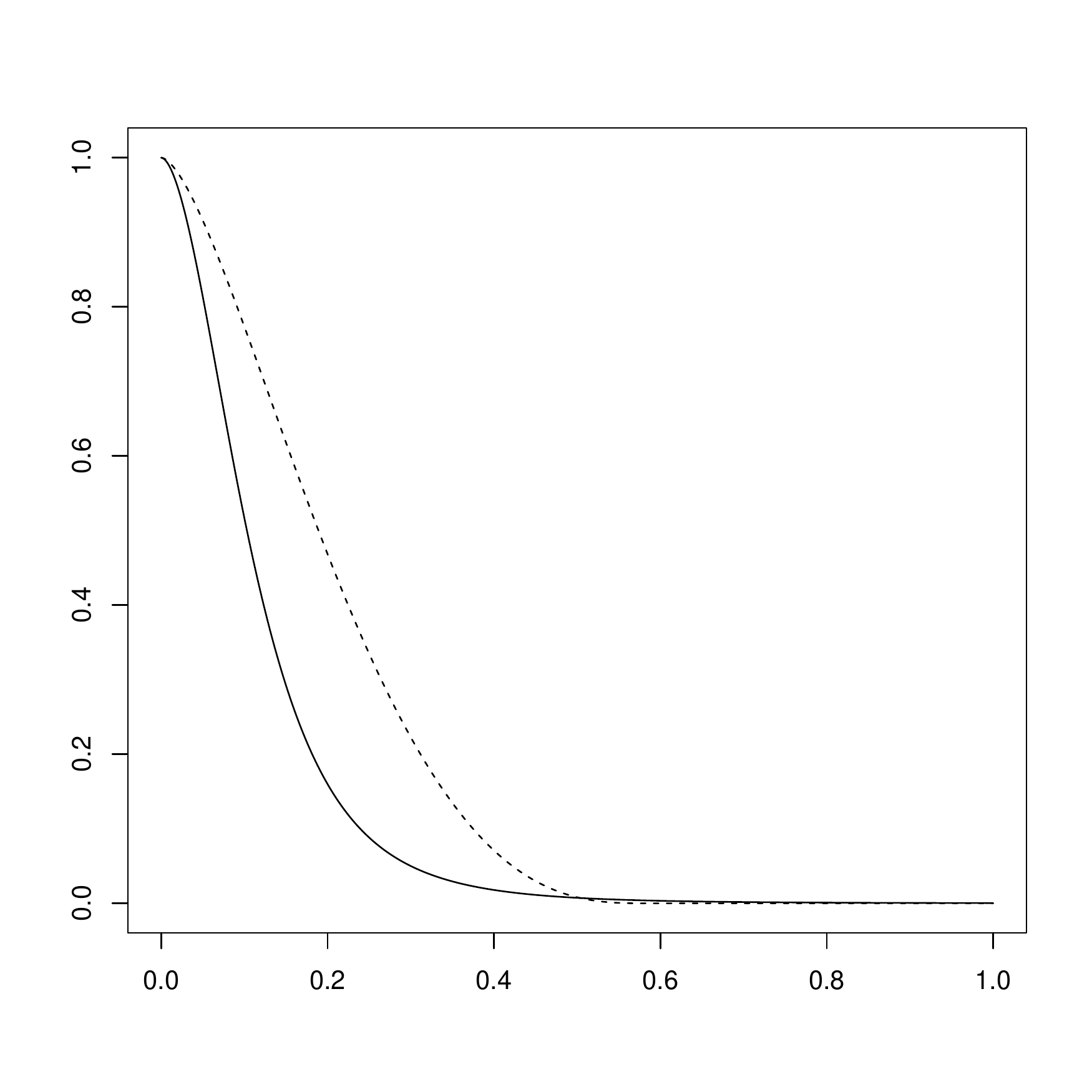}
\\
\includegraphics[width=0.4\textwidth]
{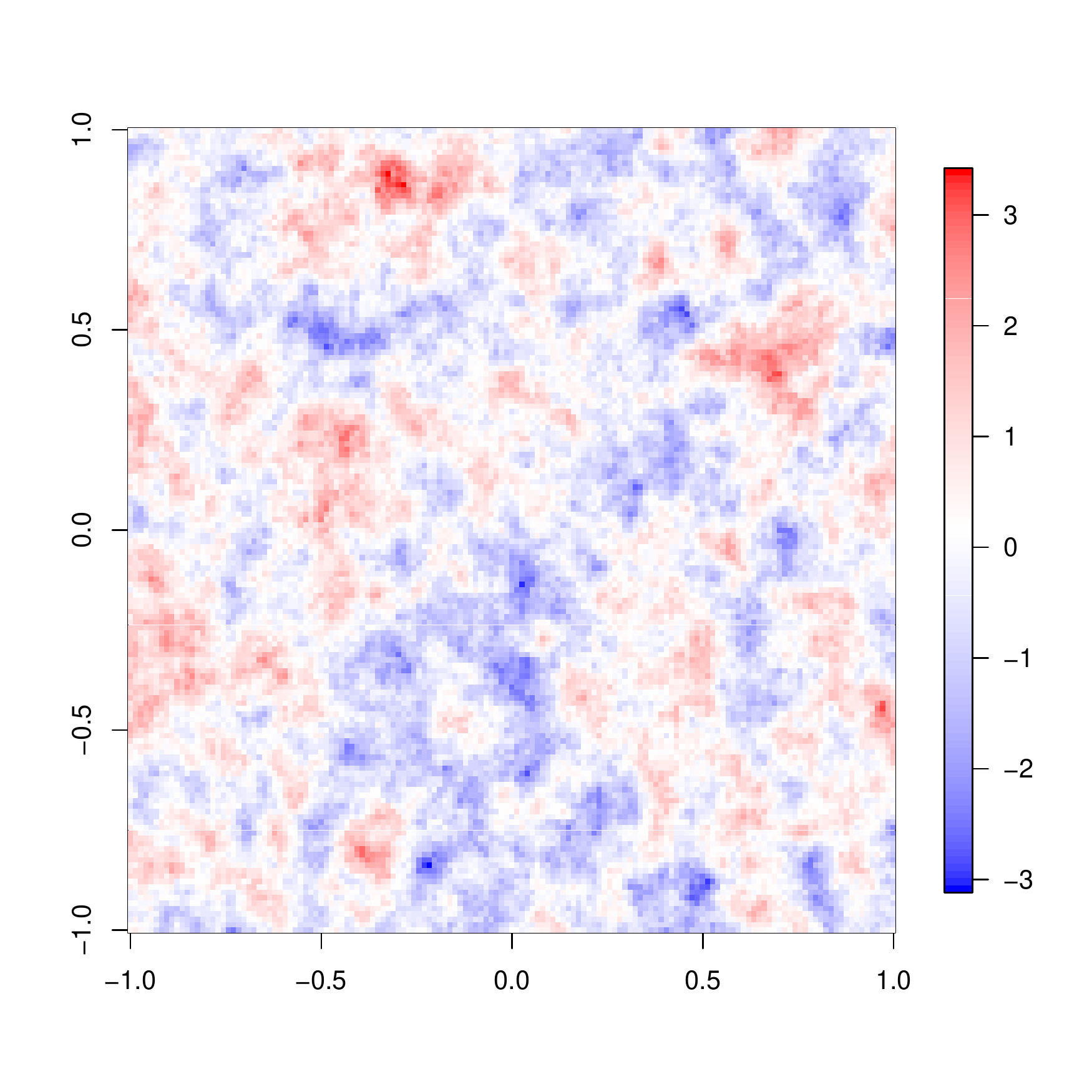}
&
\includegraphics[width=0.4\textwidth]
{RFgen.pdf}
\end{tabular}
\end{center}
\caption{Top left: a  ${\cal C}_{1.2,5,\gamma_1,1}$  model  (continous line)
and a       compatible   $\varphi_{2.1,0.1,\beta_1^*,1}$ model (dotted line).
Top right: A  ${\cal C}_{1.8,5,\gamma_1,1}$  model (continous line)
and a       compatible    $\varphi_{2.4,0.4,\beta_1^*,1}$ model (dotted line).
In both cases  $\gamma_1$ is chosen such that the practical range  is $0.3$ and $\beta^*_1$
is computed using the equivalence condition.
Bottom part: two realizations from two Gaussian random process
with  covariances  as shown in top left part (${\cal C}_{1.2,5,\gamma_1,1}$ on the left
and $\varphi_{2.1,0.1,\beta_1^*,1}$  on the right).
}\label{Fig:tre}
\end{figure}

Then
we randomly select $n_j=50, 100, 500, 1000$, $j=1,\ldots,100$
location sites without replacement from  $5000$  points uniformly distributed over $[0,1]^2$
and, for each $j$, we compute the ratio $\U_{1j}(x\beta_1^*)$, $x=1, 0.5, 2$ and the ratio
$\U_{2j}$, $j=1,\ldots,500$, using closed form expressions in
Equation~\eqref{mse_miss} and~\eqref{msetrue} when predicting the
location site $(0.26,0.48)^T$.
We consider $x=1, 0.5, 2$  in order
to investigate the effect of considering an arbitrary scale parameter on the convergence of ratio~\eqref{kauf3_1111}.

 Table~\ref{tab288} shows the
empirical means $\bar{\U}_{1}(x\beta_1^*)=\sum_{j=1}^{100}
\U_{1j}(x\beta_1^*)/100$ for $x=1, 0.5, 2$, and
$\bar{\U}_{2}=\sum_{j=1}^{100} \U_{2j}/100$ for $n_j$, $j=1,\ldots,100$.
Overall, the speed of convergence for both  $\bar{\U}_{1}(x\beta_1^*)$,  $x=1, 0.5, 2$
and $\bar{\U}_{2}$ is faster when increasing the dependence i.e. the practical range.
Additionally, as expected,  a too conservative choice of the arbitrary compact support ($0.5\beta_1^*$ in our simulations)
deteriorates the convergence of the ratio $\bar{\U}_{1}$.
These results are consistent with the results in  \cite{Bevilacqua_et_al:2016}.

It is interesting to note that
the speed  of convergence is clearly affected by the magnitude of $\delta$.  In particular for $\delta=1.8$
the convergence of both ratios is slower, in particular for  $\bar{\U}_{1}(x\beta_1^*)$,  $x=1, 0.5, 2$.
For instance, when the practical range is equal to 0.3, $n=1000$ is not sufficient to attain the convergence
for  $\bar{\U}_{1}(x\beta_1^*)$,  $x=1, 0.5, 2$.

\section{Concluding Remarks}\label{7}
  In this paper we studied estimation and prediction
of Gaussian processes with covariance models belonging to the GC family,
under fixed domain asymptotics.
Specifically, we first characterize the equivalence of two Gaussian
measures with CG models and then we establish strong
consistency and asymptotic Gaussianity of the ML estimator of the associated microergodic
parameter when considering both an arbitrary and an estimated scale parameter.
Simulation results show that for a finite sample, the choice
of an arbitrary scale parameter can result in a very poor
approximation of the asymptotic distribution.  These results are
consistent with those in \cite{Shaby:Kaufmann:2013}  in the MT
case and  \cite{Bevilacqua_et_al:2016} in the GW case.

We then  give  sufficient conditions for the equivalence of two
Gaussian measures with GW and GC model
 and two
Gaussian measures with MT and GC model
 and we
study the consequence of these results  on prediction under fixed domain asymptotics.

One remarkable  consequence of our results on optimal prediction  is that
the  mean square error prediction of
 a Gaussian process with a GC model
 can be achieved using a GW model under suitable conditions.

Then, under fixed domain asymptotics, a misspecified GW model can be used for optimal prediction when the
true covariance model is GC or MT  \citep{Bevilacqua_et_al:2016}.
GW  is an appealing  model from computational point of view since
the use of covariance functions with a compact support, leading to sparse matrices   (\cite{Furrer:2006},
\cite{Kaufman:Schervish:Nychka:2008}),
is a very accessible and scalable approach
and well established and implemented algorithms for sparse matrices can be
used when estimating the covariance parameters and/or predicting at
unknown locations (e.g.,  \cite{Furrer:Sain:2010}).
 An alternative strategy to produce sparse matrices is trough covariance tapering of the GC model but as outlined in \cite{Bevilacqua_et_al:2016},
 this kind of method is essentially an obsolete approach.

As highlighted  in Section 1, the parameter $\delta$ is crucial for the
differentiability at the origin and, as a consequence, for the degree of
differentiability of the associated sample paths.
Specifically,  for $\delta=2$, they are infinitely times differentiable
 and they are not differentiable for $\delta \in (0,2)$.
We do not  offer  results on  equivalence of Gaussian measures when $\delta=2$ and $0<\lambda< \infty$ for the GC family.
Nevertheless, it can be shown that
 ${\cal C}_{2,\lambda,  \sqrt{ \lambda\gamma/2},1}(r) \to  e^{-r^2/\gamma}$ as $\lambda \to \infty$.
This  result is consistent with the MT and GW cases
when considering the smoothness parameters going to infinity.
Specifically,
 ${\cal M}_{\nu,\sqrt{\alpha}/(2 \sqrt{\nu}),1}(r )\to e^{-r^2/\alpha}$ as $\nu\to \infty$
 and  $\varphi_{\mu,\kappa, g(\beta),1}(r) \to e^{-r^2/\beta}$
as $\kappa \to \infty$, where
 $g(\beta)= \sqrt{\beta}(\mu+2\kappa+1)\Gamma(\kappa+1/2)(2\Gamma(k+1))^{-1}$ \citep{che14}.

 Thus, rescaled versions of  GC, MT and GW
 converge to a squared exponential model
 when $\delta=2$ and $\lambda \to \infty$, $\nu\to \infty$ and $\kappa \to \infty$
 respectively.
Now, let $P({\cal G}_{\alpha_i,\sigma_i^2})$, $i=0, 1$ two zero mean Gaussian measures with squared exponential covariance function. In this case
$\widehat{{\cal G}}_{\alpha,\sigma^2}(z)=\sigma^2(\alpha/2)^{d/2}e^{-\alpha z^2/4}$ and  using (\ref{spectralfinite2}),
it can be shown that  the equivalence condition is given by  $\sigma_0^2=\sigma_1^2$, $ \alpha_0=\alpha_1$. Additionally,

$$  \underset{z \to \infty}{\lim}\frac{\widehat{{\cal G}}_{\alpha_1,\sigma_1^2}(z)}{\widehat{{\cal G}}_{\alpha_0,\sigma_0^2}(z)} =
\left\{%
\begin{array}{ll}
   0 , & \hbox{if \quad $\alpha_1>\alpha_0$}  \\
   +\infty, &  \hbox{if \quad $\alpha_1<\alpha_0$}   \\
    \sigma_1^2/   \sigma_0^2, & \hbox{if \quad $\alpha_1=\alpha_0$}  \\
\end{array}%
\right.
$$

and this implies that, under $P( {\cal G}_{\alpha_0,\sigma_0^2})$
and predicting with $P({\cal G}_{\alpha_1,\sigma_1^2})$,
 asymptotic prediction efficiency is achieved only when  $\alpha_0=\alpha_1$
 and
 asymptotically correct
estimates of error variance   are achieved under the  trivial condition $\sigma_0^2=\sigma_1^2$, $\alpha_0=\alpha_1$.



\section*{Acknowledgement}
Partial support was provided by FONDECYT grant 1160280, Chile
and  by Millennium
Science Initiative of the Ministry
of Economy, Development, and
Tourism, grant "Millenium
Nucleus Center for the
Discovery of Structures in
Complex Data"
for Moreno Bevilacqua.
The research work conducted by Tarik Faouzi was supported in part by
grant DIUBB 170308 3/I  Chile. Tarik Faouzi  thanks the support of project DIUBB 172409 GI/C at University of B{\'\i}o-B{\'\i}o.
\begin{table}
  \caption{Sample quantiles, mean  and variance of
 $\sqrt{n/2}(\widehat{\sigma}_i^2(x)\gamma_0^{\delta}/ (\sigma_0^2x^{\delta}) - 1)$, $i=1,\dots,1000$,
    for $x=\widehat{\gamma} ,\gamma_0, 1.25\gamma_0, 1.75\gamma_0$  when $\delta=0.75$, $\lambda=1.5$
     and $n=500, 1000, 2000$, compared with the associated theoretical
    values of the standard Gaussian distribution when $d=1$. Here  $\gamma_0$ is chosen such that the practical range is $0.3, 0.6, 0.9$.}\label{tab21}
\medskip\centering
\small{

\begin{tabular}{|c| c|  c| c| c| c| c| c| c| c| c|  }
	\hline
$Pr.range$  & $x$&	$n$  & 5$\%$ & 25$\%$ & 50$\%$ & 75$\%$ & 95$\%$ & Mean&Var\\
		\hline
	$0.3$&$\widehat{\gamma}$&500        &-1.806& -0.715&  0.052&  0.898 & 2.011& 0.115 &1.366\\
	&  &1000& -1.756&-0.724&-0.006&0.880&2.071&0.067&1.377\\
		&  &2000 &-1.749& -0.757 & 0.075&  0.751&  1.779& 0.022 &1.205\\

		\hline

	&$\gamma_0$&500       &-1.481 &-0.622 &-0.012&  0.746&  1.647& 0.055& 0.998\\
	&  &1000&-1.613&-0.745&-0.092&0.696&1.532&-0.040&1.018 \\
		&&2000 &-1.615 &-0.681& -0.015&  0.658 & 1.567 & -0.016 &1.005\\
		\hline

		&$1.25\gamma_0$&500       &-0.900& -0.025 & 0.640&  1.399&  2.348& 0.687& 1.086 \\
	& &1000 &-1.083&-0.178&0.453&1.281&2.202&0.518&1.068 \\
		&  &2000 &-1.152 &-0.216 & 0.483 &1.143  &2.088  &0.479& 1.038 \\
			\hline

			&$0.75\gamma_0$&500       &-1.809 &-0.982 &-0.383 & 0.351 & 1.247 &-0.320 &0.951 \\
	& &1000&-1.923&-1.076&-0.418&0.346&1.182&-0.375&0.990 \\
		&  &2000 &-1.884 &-0.962& -0.306&  0.363&  1.237& -0.314& 0.987 \\
	\hline \hline

	$0.6$&$\widehat{\gamma}$&500      &-1.685&-0.667&0.055&0.868&2.026&0.124&1.274 \\
	&  &1000&-1.678&-0.728&0.009&0.849&2.007&0.060&1.280\\
		&  &2000 &-1.692&-0.688&0.043&0.723&1.771&0.028&1.154\\

		\hline

	&$\gamma_0$&500       & -1.481&-0.622&-0.012&0.746&1.647&0.055&0.998\\
	&  &1000&-1.613&-0.745&-0.092&0.696&1.532&-0.040&1.018  \\
		&&2000 &-1.616&-0.681&-0.015&0.658&1.567&-0.016&1.005 \\
		\hline

		&$1.25\gamma_0$&500       &-1.104&-0.249&0.426&1.165&2.105&0.460&1.053 \\
	& &1000&-1.281&-0.388&0.252&1.058&1.949&0.308&1.049  \\
		&  &2000 &-1.341&-0.391&0.296&0.954&1.875&0.289&1.025\\
			\hline

			&$0.75\gamma_0$&500       &-1.693&-0.847&-0.247&0.487&1.391&-0.187&0.968\\
	& &1000&-1.810&-0.947&-0.293&0.472&1.312&-0.250&1.000 \\
		&  &2000 &-1.778&-0.858&-0.197&0.474&1.355&-0.200&0.994\\

				\hline \hline

$0.9$&$\widehat{\gamma}$&500       &-1.654&-0.651&0.076&0.848&1.979&0.130&1.232\\
	&  &1000&-1.688&-0.730&0.026&0.835&1.975&0.067&1.230  \\
		&  &2000 &-1.656&-0.712&0.051&0.708&1.751&0.032&1.131\\

		\hline

	&$\gamma_0$&500       &-1.481&-0.622&-0.012&0.746&1.647&0.055&0.998 \\
	&  &1000&   -1.613&-0.745&-0.092&0.696&1.532&-0.040&1.018\\
		&&2000 &-1.613&-0.681&-0.015&0.658&1.567&-0.016&1.005 \\
		\hline

		&$1.25\gamma_0$&500       & -1.193&-0.339&0.311&1.069&1.988&0.364&1.039 \\
	& &1000& -1.360&-0.475&0.170&0.973&1.845&0.222&1.041\\
		&  &2000 &-1.414&-0.465&0.219&0.882&1.797&0.212&1.020 \\
			\hline

			&$0.75\gamma_0$&500       & -1.644&-0.791&-0.189&0.547&1.452&-0.130&0.975 \\
	& &1000 &-1.761&-0.901&-0.245&0.532&1.364&-0.199&1.004 \\
		&  &2000 &-1.736&-0.814&-0.154&0.518&1.402&-0.154&0.997 \\
		\hline
		\multicolumn{3}{|c|}{$N (0,1)$} &-1.645 & -0.674 &  0 &  0.674 & 1.645&0&1\\
	\hline
\end{tabular}
}
\end{table}

\newpage
\begin{table}
\caption{$\bar{\U}_1(x)$, $x=0.5\beta_1^*, 2\beta_1^*,\beta_1^*$ and $\bar{\U}_2$ as defined in
(\ref{kauf3_1111}) and (\ref{kauf3_333}),
when considering a GC model  with  increasing practical range $(0.3, 0.6, 0.9)$,  smoothness parameter $\delta=1.2, 1.8$  and $n=50, 100, 500, 1000$.
Here $\beta_1^*$ is the  compact support parameter  of the GW model computed using the  equivalence condition.}\label{tab288}
\medskip\centering
\small{
\begin{tabular}{|c| c|c| c| c| c| c|  }
	\hline
$\delta$&   &$n$&$Pr.range= 0.3$& $Pr.range=0.6$ & $Pr.range=0.9$ \\
	\hline 
	$1.2$ & $\bar{\U}_{1}(0.5\beta_1^*)$  & 50  &1.164   &1.256 & 1.155\\
			&&   100& 1.237    &  1.172   &1.080  \\
			&&   500   &1.126 &  1.043   & 1.027 \\
			&&   1000   &1.068 & 1.029   &1.018  \\
				\hline
	&$\bar{\U}_{1}(2\beta_1^*)$  &50 &1.002  &1.035  &1.050  \\
		&&   100   &1.007 &1.052     & 1.047 \\
			&&   500   &1.055 &  1.036   & 1.024 \\
			&&   1000   & 1.047& 1.025   & 1.016 \\
		\hline
	&$\bar{\U}_{1}(\beta_1^*) $  & 50  &0.969  & 1.038& 1.052 \\
		&&   100   &0.999 &1.056    &1.048  \\
			&&   500   & 1.059& 1.037   &1.024  \\
			&&  1000   & 1.049&  1.026  &1.017 \\
					\hline
	&$\bar{\U}_{2}$  & 50  &0.973  & 0.987& 0.996 \\
			&&   100   & 0.979&  0.993  &0.998  \\
			&&   500   &0.994 &  0.998  & 0.999 \\
			&&   1000   & 0.996&  0.999  & 1.000 \\
	\hline
$1.8$& $\bar{\U}_{1}(0.5\beta_1^*)$  &50 & 2.696 & 2.055  &1.512  \\
&&  $ 100$   &2.533 & 1.575    & 1.232 \\
			&&   500   &1.423 &1.084    & 1.035 \\
			&&   1000   &1.215 &  1.040  &1.018  \\
				\hline
	&$\bar{\U}_{1}(2\beta_1^*) $  &50 & 2.812  &2.045 &1.508  \\
			&&   100   & 2.548&    1.566  &  1.231\\
			&&  500   &1.411 &  1.083  & 1.035 \\
			&&   1000   &1.209 & 1.039   &  1.018\\
		\hline
	&$\bar{\U}_{1}(\beta_1^*) $  &50 &2.810 &2.045&1.509 \\
			&&   100   &2.550 & 1.566   &1.231  \\
			&&   500   & 1.413& 1.083   &  1.035\\
			&&  1000   &1.210 & 1.039   & 1.018 \\
					\hline
	&$\bar{\U}_{2}$  &50  &0.944  &0.946 &  0.950\\
		&&  100   &0.958 &0.947    &0.973  \\
			&&  500  &0.960 &0.993    &0.999  \\
			&&   1000   &0.977 &0.998    & 1.000 \\
	\hline
\end{tabular}
}
\end{table}

\bibliographystyle{ECA_jasa}
\bibliography{mybib}

\end{document}